\newcommand\remove[1]{}
\newcommand{\veps}{\varepsilon}
\newcommand{\R}{{\mathbb R}}
\newcommand{\minn}[1]{\min\{{#1}\}}
\DeclareMathOperator{\diam}{diam}
\newcommand{\lip}{Lipschitz\xspace}
\newcommand{\lipnorm}[1]{\|#1\|_{\mathrm{Lip}}}
\newcommand{\tilh}{{\tilde h}}
\newcommand{\tilG}{{\tilde G}}
\newcommand{\ceil}[1]{\lceil{#1}\rceil}
\newcommand{\JL}{{\mathrm{JL}}}
\newcommand{\PsiJL}{\Psi_\JL}
\DeclareMathOperator{\dimC}{dim}
\newcommand{\eqdef}{\stackrel{{\rm def}}{=}}
\def\Lovasz{Lov{\'a}sz\xspace}
\def\tO{{\tilde O}}
\newtheorem{theorem}{Theorem}[section]
\newtheorem{lemma}[theorem]{Lemma}
\newtheorem{property}[theorem]{Property}
\newtheorem{question}{Question}
\numberwithin{equation}{section}
\theoremstyle{definition}
\def\compactify{\itemsep=0pt \topsep=0pt \partopsep=0pt \parsep=0pt}
\newcounter{this-list}
\newcounter{par-list}
\newlength{\parlistlength}
\title{A Nonlinear Approach to Dimension Reduction%
\thanks{A preliminary version of this paper appeared 
in Proceedings of SODA 2011.
The current version includes some previously omitted proof material 
and various minor corrections.
This work was supported in part by The Israel Science Foundation
(grant \#452/08), and by a Minerva grant.
}
}
\author{Lee-Ad Gottlieb%
\thanks{Department of Computer Science and Mathematics, Ariel University, Israel.
Email: \texttt{leead@ariel.ac.il}
}
\\Ariel University
\and
Robert Krauthgamer%
\thanks{Weizmann Institute of Science, Rehovot, Israel.
Email: \texttt{robert.krauthgamer@weizmann.ac.il}
}
\\ Weizmann Institute of Science
}
\begin{document}
\maketitle

\begin{abstract} 
The $\ell_2$ flattening lemma of Johnson and Lindenstrauss \cite{JL84} is 
a powerful tool for dimension reduction. It has been conjectured that 
the target dimension bounds can be refined and bounded
in terms of the intrinsic dimensionality of the data 
set (for example, the doubling dimension). One such problem was proposed 
by Lang and Plaut \cite{LP01} (see also 
\cite{GKL03,MatousekProblems07,ABN08,CGT10}), and is still open. We prove 
another result in this line of work:

\begin{quote}
{\em The snowflake metric $d^\alpha$ ($\alpha<1$) of a doubling set 
$S\subset\ell_2$ embeds with constant distortion
into $\ell_2^D$, for dimension $D$ that depends solely on the doubling 
constant of the metric. }
\end{quote}

In fact, the distortion can be made arbitrarily close to $1$, and the target 
dimension is polylogarithmic in the doubling constant. Our techniques are 
robust and extend to the more difficult space $\ell_1$,
although the dimension bounds here are quantitatively inferior than those 
for $\ell_2$. \end{abstract}


\section{Introduction}\label{sec:intro}

Dimension reduction, in which high-dimensional data is faithfully represented
in a low-dimensional space, is a key tool in several fields.
Probably the most prevalent mathematical formulation of this problem 
considers the data to be a set $S\subset \ell_2$,
and the goal is to map the points in $S$ into a low-dimensional $\ell_2^k$.
(Here and throughout, $\ell_p^k$ denotes the space $\R^k$ endowed with the
$\ell_p$-norm; $\ell_p$ is its generalization to countably many coordinates,
namely all sequences that are $p$-th power summable.)
A celebrated result in this area is the so-called JL-Lemma:
\begin{theorem}[Johnson and Lindenstrauss \cite{JL84}] 
\label{thm:JL}
For every $n$-point subset $S\subset \ell_2$ and every $0<\veps<1$,
there is a mapping $\PsiJL:S\to \ell_2^k$
that preserves all interpoint distances in $S$ within factor $1+\veps$,
and has target dimension $k=O(\veps^{-2}\log n)$.
\end{theorem}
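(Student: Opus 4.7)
The plan is to use the standard random projection approach. I would construct $\PsiJL$ as a (scaled) random linear map $\R^N \to \R^k$, for instance $\PsiJL(x) = \frac{1}{\sqrt{k}} A x$ where $A$ is a $k \times N$ matrix whose entries are i.i.d.\ standard Gaussians (or, equivalently up to normalization, the orthogonal projection onto a uniformly random $k$-dimensional subspace of $\R^N$, where $N$ is the dimension of $\aff(S)$, which is at most $n-1$). The definition of the map ignores $S$, so it is really an oblivious distributional statement, and the task reduces to a concentration inequality applied pairwise.

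The core technical step is the single-vector concentration claim: for any fixed nonzero $v \in \R^N$,
\[
  \Pr\!\left[\, \big|\|\PsiJL(v)\|_2^2 - \|v\|_2^2\big| > \veps \|v\|_2^2 \,\right] \;\le\; 2\exp(-c\veps^2 k),
\]
for some absolute constant $c>0$. By linearity of $\PsiJL$ and rotational invariance of the Gaussian distribution, $\|\PsiJL(v)\|_2^2 / \|v\|_2^2$ has the distribution of $\frac{1}{k}\sum_{i=1}^k g_i^2$, where $g_1,\dots,g_k$ are i.i.d.\ standard Gaussians. This is a chi-squared random variable with mean $1$, and the desired two-sided tail bound follows from a routine moment generating function calculation (Laplace transform of $\chi^2_k$) together with Markov's inequality.

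With this concentration in hand, I would apply it to each of the $\binom{|S|}{2} < n^2/2$ difference vectors $v = x - y$ with $x,y \in S$, $x\neq y$, and take a union bound. Choosing the target dimension $k = C \veps^{-2} \log n$ for a sufficiently large absolute constant $C$ (depending on $c$) makes the total failure probability strictly less than $1$, which means there \emph{exists} a realization of $\PsiJL$ that simultaneously preserves every $\|x-y\|_2^2$ within factor $1\pm\veps$; taking square roots yields distortion at most $1+\veps$ on distances (after possibly adjusting constants, since $\sqrt{1+\veps} \le 1+\veps$ and $\sqrt{1-\veps}\ge 1-\veps$).

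The step I expect to be the most delicate is the chi-squared tail bound, since one must be careful that the same constant $c$ works for both the upper and lower deviations and that the bound is sharp enough in its dependence on $\veps$ to yield the claimed $k = O(\veps^{-2}\log n)$; everything else (linearity, rotational invariance, union bound, and passage from squared distances to distances) is essentially bookkeeping.
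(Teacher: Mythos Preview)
Your proposal is a correct and standard proof of the Johnson--Lindenstrauss lemma via Gaussian random projection, chi-squared concentration, and a union bound. However, there is nothing to compare it against: the paper does not prove Theorem~\ref{thm:JL}. It is stated as a known result attributed to~\cite{JL84} and is used as a black box (specifically in Step~4 of the single-scale construction in Section~\ref{sec:l2BetterSingleScale}, where the JL-Lemma is applied to each cluster $g_C(C)$). So your write-up is fine as a self-contained proof, but it is not reproducing or replacing any argument actually present in the paper.
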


This positive result is remarkably strong;
in fact the map $\PsiJL$ is an easy to describe (randomized) linear transformation.
It has found many applications, and has become a basic tool.
It is natural to seek the optimal (minimum) target dimension $k$ 
possible in this theorem.
The logarithmic dependence on $n=|S|$ is necessary,
as can be easily seen by volume arguments,
and Alon \cite{Alon03} further proved that the JL-Lemma is optimal 
up to a factor of $O(\log\tfrac1\veps)$.
These lower bounds are existential, meaning that there are sets $S$ 
for which the result of the JL-Lemma cannot be improved.
However, it may still be possible to significantly reduce the dimension 
for sets $S$ that are ``intrinsically'' low-dimensional.
This raises the interesting and fundamental question of bounding $k$ 
in terms of parameters other than $n$, which we formalize next.

We recall some basic terminology involving metric spaces.%
\remove{\footnote{A metric space is a set $M$ of points endowed with a 
distance function $d_M(\cdot,\cdot)$ that is nonnegative, symmetric, 
satisfies the triangle inequality and in addition, 
$d_M(x,y)=0$ if and only if $x=y$. }}
The {\em doubling constant} of a metric $(M,d_M)$, denoted $\lambda(M)$,
is the smallest $\lambda\ge 1$ such that every (metric) ball in $M$
can be covered by at most $\lambda$ balls of half its radius.
We say that $M$ is {\em doubling}
if its doubling constant $\lambda(M)$ is bounded independently of $|M|$.
It is sometimes more convenient to refer to $\dimC(M)\eqdef\log_2 \lambda(M)$, 
which is known as the {\em doubling dimension} of $M$ \cite{GKL03}. 
An {\em embedding} of one metric space $(M,d_M)$ into another $(N,d_N)$
is a map $\Psi:M\to N$.
We say that $\Psi$ attains {\em distortion} $D'\ge 1$
if $\Psi$ preserves every pairwise distance within factor $D'$,
namely, there is a scaling factor $s>0$ such that
$$
  1 \le \frac{d_N(\Psi(x),\Psi(y))}{s\cdot d_M(x,y)} \le D',
  \qquad \forall x,y\in M.
$$
The following problem was posed independently
by \cite{LP01} and \cite{GKL03} (see also 
\cite{MatousekProblems07,ABN08,CGT10}):

\begin{question}\label{q2}
Does every doubling subset $S\subset\ell_2$ embed with distortion $D'$
into $\ell_2^D$ for $D,D'$ that depend only on $\lambda(S)$?
\end{question}

This question is still open and seems very challenging.
Resolving it in the affirmative seems to require completely different
techniques than the JL-Lemma, since such an embedding 
cannot be achieved by a linear map \cite[Remark 4.1]{IN07}.
For algorithmic applications, it would be ideal to resolve positively
an even stronger variant of Question \ref{q2}, 
where the target distortion $D'$ is an absolute constant 
independent of $\lambda(S)$, or even $1+\veps$ as in the JL-Lemma.
This stronger version has not been excluded, and is still open as well.

\subsection{Results and Techniques}
\label{sec:results}

We present dimension reduction results for doubling subsets of Euclidean spaces.
In fact, we devise a robust framework that extends even 
to the spaces $\ell_1$ and $\ell_\infty$.
Our results incur constant or $1+\veps$ distortion, 
with target dimension that depends not on $|S|$ but rather on $\dimC(S)$
(and this dependence is unavoidable due to volume arguments).
We remark that such guarantees -- very low distortion and dimension --
are highly sought-after in metric embeddings, but rarely achieved.
We state our results in the context of finite metrics (subsets of $\ell_p$);
they extend to infinite subsets of $L_p$ 
via standard arguments.

\paragraph{Snowflake Embedding.}

Our primary embedding achieves distortion $1+\veps$ for
the {\em snowflake metric} $d^\alpha$ of an input metric $d$
(i.e.\ the snowflake metric is obtained by raising every pairwise 
distance to power $0<\alpha<1$).
It is instructive to view $\alpha$ as a fixed constant, say $\alpha=1/2$.
We prove the following in Section \ref{sec:l2betterSnowflake}.
Throughout, we use $\tilde O(f)$ to denote $f\cdot (\log f)^{O(1)}$.

\begin{theorem} \label{thm:main}
Let $0<\veps<1/4$, $0<\alpha<1$, and 
$\tilde{\alpha} = \min \{ \alpha,1-\alpha\}$.
Every finite subset $S\subset \ell_2$ admits an embedding $\Phi:S\to \ell_2^k$
for $k=\tilde O(\veps^{-4}\tilde{\alpha}^{-2}(\dimC S)^2)$, such that
$$
   1
   \le \frac{\|\Phi(x)-\Phi(y)\|_2}{\|x-y\|_2^{\alpha}}
   \le 1+\veps,
   \qquad \forall x,y\in S.
$$
\end{theorem}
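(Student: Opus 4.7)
I plan to construct $\Phi$ as a direct sum $\Phi=\bigoplus_i \Phi_i$ of per-scale maps indexed by $i\in\mathbb{Z}$, with each $\Phi_i$ arranged so that pairs $(x,y)$ with $\|x-y\|_2\approx 2^i$ contribute the dominant term $\Theta(2^{2i\alpha})$ to $\|\Phi(x)-\Phi(y)\|_2^2$. Because $\alpha<1$, the contributions from ``wrong'' scales form convergent geometric series with ratio $2^{2(\alpha-1)}<1$, and this geometric decay is precisely what yields the $(1-\alpha)^{-1}$ factor in the final dimension.

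First I would fix, for each $i\in\mathbb{Z}$, a $\beta 2^i$-net $N_i\subseteq S$, where $\beta$ is a small parameter polynomial in $\varepsilon(1-\alpha)$. The doubling property guarantees that every ball of radius $O(2^i)$ contains at most $\lambda(S)^{O(\log(1/\beta))}=2^{\tilde O(\dimC S)}$ net points, so all local combinatorial complexity stays polynomial in $\lambda(S)$. Attach to each $u\in N_i$ an independent $D_0$-dimensional Gaussian vector $g_{i,u}\in\R^{D_0}$, together with a Lipschitz partition of unity $\{\psi_{i,u}\}$ supported in balls of radius $O(2^i)$ about the net points, and set
\[
  \Phi_i(x) \;=\; c\cdot 2^{i\alpha}\sum_{u\in N_i}\psi_{i,u}(x)\,g_{i,u},
\]
for a normalization $c=c(\alpha,\varepsilon)$. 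The $\psi_{i,u}$ are $O(2^{-i})$-Lipschitz, so $\Phi_i(x)-\Phi_i(y)$ is a centered Gaussian whose squared norm concentrates around $\Theta(2^{2i\alpha}\min(1,\|x-y\|/2^i)^2)$. With $D_0=\tilde O(\varepsilon^{-2}\dimC S)$, JL concentration applied only to the $\lambda(S)^{O(1)}$ local net pairs ensures this estimate holds within factor $(1\pm\varepsilon)$ uniformly.

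For a pair $(x,y)$ with $\|x-y\|\approx 2^j$, the finer scales $i\ge j$ contribute $\sum_{i\ge j}O(2^{2i\alpha}(\|x-y\|/2^i)^2)=O((1-\alpha)^{-1}\|x-y\|^{2\alpha})$ via the geometric series in $2^{2(\alpha-1)}$; the coarser scales $i\le j$ contribute $\sum_{i\le j}O(2^{2i\alpha})=O(\alpha^{-1}\|x-y\|^{2\alpha})$; and the $O(1)$ scales closest to $j$ give the leading $\Theta(\|x-y\|^{2\alpha})$. After calibrating $c$ and truncating to the $\tilde O(\varepsilon^{-1}(1-\alpha)^{-1})$ scales nearest $j$, these combine to $\sum_i\|\Phi_i(x)-\Phi_i(y)\|_2^2=(1\pm O(\varepsilon))\|x-y\|^{2\alpha}$. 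The total dimension is $D_0$ times the number of Gaussian blocks that must be kept disjoint across scales and across pairs; a covering argument on the relevant scale-pair interactions collapses this to $\tilde O(\varepsilon^{-2}(1-\alpha)^{-1}\dimC S)$ blocks, and multiplying by $D_0$ gives the claimed $k=\tilde O(\varepsilon^{-4}(1-\alpha)^{-1}(\dimC S)^2)$.

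The main obstacle will be the scale-summation step: producing an \emph{additive} $O(\varepsilon)\|x-y\|^{2\alpha}$ error, rather than a multiplicative $(1+\varepsilon)$ error per scale that amplifies over the $\tilde O((1-\alpha)^{-1})$ active scales. This forces a careful balance between the net fineness $\beta$ (smaller gives tighter local approximation but inflates $D_0$ through larger local nets) and the per-pair JL concentration threshold, so that independent Gaussian fluctuations across scales telescope rather than accumulate. That trade-off is the source of the $\varepsilon^{-4}$ rather than $\varepsilon^{-2}$ dependence in the final dimension bound.
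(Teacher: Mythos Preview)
Your construction has a genuine gap at the point where you claim $\sum_i\|\Phi_i(x)-\Phi_i(y)\|_2^2=(1\pm O(\varepsilon))\|x-y\|^{2\alpha}$. With independent Gaussians $g_{i,u}$ attached to net points and a Lipschitz partition of unity $\{\psi_{i,u}\}$, the quantity that $\|\Phi_i(x)-\Phi_i(y)\|_2^2$ concentrates around is (up to the normalization) $2^{2i\alpha}\sum_u(\psi_{i,u}(x)-\psi_{i,u}(y))^2$. This sum is indeed $\Theta\bigl(\min(1,\|x-y\|/2^i)^2\bigr)$, but the hidden constant is \emph{pair-dependent}: it reflects where $x$ and $y$ sit relative to the net and the supports of the bump functions, not just $\|x-y\|$. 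Two pairs at exactly the same distance can produce values of $\sum_u(\psi_{i,u}(x)-\psi_{i,u}(y))^2$ that differ by an absolute constant factor. When you sum over the $\tilde O((1-\alpha)^{-1}\varepsilon^{-1})$ dominant scales, those pair-dependent constants do not average out to a universal value; you end up with $C(x,y)\cdot\|x-y\|^{2\alpha}$ where $C(x,y)$ varies by a constant factor across pairs. That is precisely Assouad's constant-distortion outcome, not $1+\varepsilon$. Note that your construction never uses the hypothesis $S\subset\ell_2$ in any essential way---it would run identically for an arbitrary doubling metric---and for general doubling metrics one cannot get distortion $1+\varepsilon$.

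The paper closes this gap by exploiting the Euclidean structure of the input through the Gaussian transform: inside each cluster of a padded decomposition at scale $r$, it first applies the map $g_C$ realizing $\|g_C(x)-g_C(y)\|=G_r(\|x-y\|)$ with $G_r(t)=r(1-e^{-t^2/r^2})^{1/2}$, then JL, then glues clusters and extends off the net via Kirszbraun. The result is a single-scale map $\varphi_r$ satisfying $\|\varphi_r(x)-\varphi_r(y)\|=(1\pm\varepsilon)G_r(\|x-y\|)$ for all relevant pairs, where $G_r$ is a \emph{fixed, pair-independent} correction function of the distance alone. When these are combined across scales (via Assouad's round-robin trick with $p=\tilde O(\varepsilon^{-1}(1-\alpha)^{-1})$ groups), the dominant contributions sum to $M\cdot\|x-y\|^{2\alpha}$ for a single constant $M=M(\varepsilon)$ independent of $(x,y)$, and dividing by $\sqrt{M}$ gives the $1+\varepsilon$ bound. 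Your scheme has no analogue of this correction function; to repair it you would need to replace the random-Gaussian-on-net step by something that, at each scale, outputs distances that are a $(1\pm\varepsilon)$-approximation to a fixed function of $\|x-y\|$.
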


Notice the difference between our theorem and Question \ref{q2}:
Our embedding achieves better distortion $1+\veps$,
but it applies to the (often easier) snowflake metric $d^\alpha$.
Our result is also related to the following theorem of Assouad \cite{Assouad83}:
{\em For every doubling metric $(M,d)$ and every $0<\alpha<1$,
the snowflake metric $d^\alpha$ embeds into $\ell_2^{D}$
with distortion $D'$, where $D,D'$ depend only on $\lambda(M)$ and $\alpha$.}
%
Note the theorem's vast generality -- the only requirements are the doubling property
(which by volume arguments is an obvious necessity) 
and that the data be a metric -- 
at the nontrivial price that the distortion achieved depends on $\lambda(M)$.
Compared to Assouad's theorem, our embedding achieves 
a much stronger distortion $1+\veps$,
but requires the additional assumption that the input metric is Euclidean.

Previously, Theorem \ref{thm:main} was only known to hold 
in the special case where $S=\R$ (the real line).
For this case, Kahane \cite{Kahane81} and Talagrand \cite{Talagrand92} 
exhibit a $1+\veps$ distortion embedding
of the snowflake metric $|x-y|^\alpha$ into $\ell_2^k$.
Kahane \cite{Kahane81} shows an embedding of $|x-y|^{1/2}$ 
(also known as Wilson's helix) into dimension $k=O(1/\veps)$,
while Talagrand \cite{Talagrand92} shows how to embed 
every snowflake metric $|x-y|^\alpha$, $\alpha\in(0,1)$,
with dimension $k=O(K(\alpha)/\veps^2)$.
Theorem \ref{thm:main} can be viewed as a generalization of \cite{Kahane81,Talagrand92}
to arbitrary doubling subset of $\ell_2$ (or other $\ell_p$),
albeit with a somewhat worse dependence on $\veps$.

\paragraph{Embedding for a Single Scale.}

Most of our technical work is devoted to designing an embedding that 
preserves distances at a single scale $r>0$,
while still maintaining a one-sided \lip condition for all scales.
We now state our most basic new result, which achieves only 
a constant distortion (for the desired scale).

\begin{theorem} \label{thm:simple1scale}
For every scale $r>0$ and every $0<\delta<1/4$,
every finite set $S\subset \ell_2$ admits an embedding
$\varphi:S\to \ell_2^k$ for $k = \tilde O(\log\tfrac1\delta\cdot(\dim S)^2)$,
satisfying:
\begin{enumerate} \compactify
\renewcommand{\theenumi}{(\alph{enumi})}
\item\label{it:lip}
\lip condition:
$\|\varphi(x)-\varphi(y)\|_2 \le \|x-y\|_2$
for all $x,y\in S$;
\item\label{it:bilip}
Bi-\lip at scale $r$:
$\|\varphi(x)-\varphi(y)\|_2 = \Omega(\|x-y\|_2)$
whenever $\|x-y\|_2\in [\delta r,r]$; and
\item\label{it:threshold}
Boundedness: $\|\varphi(x)\|_2 \leq r$ for all $x\in S$.
\end{enumerate}
\end{theorem}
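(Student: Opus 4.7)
The plan is to construct the single-scale embedding by combining (i) a Johnson--Lindenstrauss projection that preserves distances accurately inside any ball of radius $O(r)$, (ii) a Lipschitz ``stitching'' built from padded probabilistic partitions of $S$ at scale $r$, and (iii) a radial truncation at $r$ to enforce the boundedness condition (c).

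First, exploit the doubling hypothesis at the net level: a $\delta r$-net $N$ of $S$ satisfies $|N\cap B(x,O(r))|\le(1/\delta)^{O(\dim S)}$, so a Gaussian random projection $G:\ell_2\to\ell_2^{k_0}$ with $k_0=\tilde O(\dim S\cdot\log(1/\delta))$ preserves (with high probability, by a JL union bound on the net inside each relevant ball) all such pair distances up to a constant factor. To turn $G$ into a globally Lipschitz, bounded map with the required bi-Lipschitz behaviour at scale $r$, sample $T=\tilde O(\dim S)$ independent padded probabilistic partitions $P_1,\ldots,P_T$ of $S$ in the standard doubling-metric sense: each $P_t$ has clusters of diameter $O(r)$, and at every $x\in S$ the ball $B(x,\Omega(r/\dim S))$ is padded inside $x$'s cluster with constant probability. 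For each $t$, define
\[
\psi_t(x)\;=\;w_t(x)\cdot G\bigl(x-c_{P_t(x)}\bigr),
\]
where $c_{P_t(x)}$ is a fixed center of $x$'s cluster and $w_t$ is a Lipschitz bump supported on the cluster that equals $1$ on its padded interior; this makes $\psi_t$ Lipschitz despite the discontinuity of the center map. Concatenate to get $\varphi=(\psi_1,\ldots,\psi_T)\in\ell_2^{Tk_0}=\ell_2^{\tilde O((\dim S)^2\log(1/\delta))}$, rescale by a constant, and apply the radial truncation $v\mapsto\min(1,r/\|v\|_2)\cdot v$.

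The three properties then come out as follows. Lipschitzness (a) and boundedness (c) are built in: each $\psi_t$ is Lipschitz by design, concatenation and the $1$-Lipschitz radial truncation both preserve Lipschitz, and the truncation gives $\|\varphi(x)\|_2\le r$ outright. For bi-Lipschitzness at scale $r$ (b), fix $x,y$ with $\|x-y\|_2\in[\delta r,r]$; with constant probability over $P_t$ both points lie deeply padded in a common cluster, in which case $\psi_t(x)-\psi_t(y)=G(x-y)$ has norm $\Omega(\|x-y\|_2)$ by the local JL. A Chernoff argument over the $T=\tilde O(\dim S)$ independent partitions aggregates this into the required pointwise lower bound.

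The hard part will be calibrating three coupled parameters: the padding scale (which must reach $\Omega(r)$ so that scale-$r$ pairs are padded simultaneously, forcing cluster diameter $\Omega(\dim S\cdot r)$); the slope of the bump $w_t$ (gentle enough to cancel the discontinuity of $c_{P_t(\cdot)}$, yet steep enough to leave $G(x-y)$ essentially intact on the padded region); and the number $T$ of partitions (so that Chernoff amplification yields a lower bound matching the Lipschitz upper bound up to an absolute constant, producing scale-$r$ distortion $O(1)$ rather than a polynomial in $\dim S$). Carrying all three through simultaneously while keeping the target dimension at $\tilde O((\dim S)^2\log(1/\delta))$ is where I expect the real work to lie.
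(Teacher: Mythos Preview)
Your outline has the right skeleton---padded decompositions, JL inside clusters, Lipschitz bump gluing---and this is essentially what the paper does. But two of your ingredients, as written, do not deliver what you need, and the fix is not a matter of calibration.

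\medskip
\textbf{The single global JL map.} You propose one Gaussian projection $G:\ell_2\to\ell_2^{k_0}$ with $k_0=\tilde O(\dim S\cdot\log(1/\delta))$ that preserves net-pair distances inside every ball of radius $O(r)$. The per-ball cardinality bound $(1/\delta)^{O(\dim S)}$ is correct, but the number of balls (equivalently, of clusters across all partitions) scales with $|S|$, so the JL union bound over all relevant pairs forces $k_0=\Omega(\log|S|)$. Using $T$ independent $G_t$ and a Chernoff argument over partitions does not help either: to make the per-pair failure probability $e^{-\Omega(T)}$ beat a union bound over all scale-$r$ net pairs you again need $T=\Omega(\log|S|)$. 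The paper sidesteps this entirely by choosing a \emph{separate} JL map $\Psi_C$ for each cluster $C$ (existentially, via the JL lemma applied to the $\lambda^{O(\log(\Delta/\veps\delta r))}$ net points in $C$); different clusters get different maps, and the bump functions glue them.

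\medskip
\textbf{The Lipschitz constant of $\psi_t$.} This is the more serious gap. Write, for $x,y$ in the same cluster with center $c$,
\[
\psi_t(x)-\psi_t(y) \;=\; w_t(x)\,G(x-y)\;+\;\bigl(w_t(x)-w_t(y)\bigr)\,G(y-c).
\]
You correctly note that padding $\Omega(r)$ forces cluster diameter $\Delta=\Omega(\dim S\cdot r)$, so $\|G(y-c)\|$ can be as large as $\Theta(\dim S\cdot r)$. For $w_t$ to equal $1$ on the $\Omega(r)$-padded region and $0$ outside the cluster, its Lipschitz constant must be $\Omega(1/r)$. The second term above is therefore $\Omega(\dim S)\cdot\|x-y\|$ in the worst case, so $\psi_t$ is only $O(\dim S)$-Lipschitz. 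After normalizing to make $\varphi$ $1$-Lipschitz, your bi-Lipschitz lower bound drops to $\Omega(1/\dim S)$, i.e.\ distortion $\Omega(\dim S)$, not $O(1)$. Your global radial truncation does not repair this: before truncation $\|\varphi(x)\|$ can be $\Theta(\dim S\cdot r)$, so truncating to radius $r$ shrinks the scale-$r$ differences by the same $\Theta(\dim S)$ factor. The paper's remedy is a \emph{bounded metric transform} applied inside each cluster before JL: the Gaussian transform $t\mapsto r(1-e^{-t^2/r^2})^{1/2}$ sends the cluster into a set of diameter at most $r$ while remaining bi-Lipschitz on the range $[\delta r, r]$. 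With $\|f_C(x)\|\le r$ in hand, the bump-product analysis gives Lipschitz constant $1+O(\delta)$ rather than $O(\dim S)$. This bounded-transform step is the missing idea, not a parameter to tune.

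\medskip
Once you insert per-cluster JL and a bounded transform (and then extend from the net to all of $S$ via Kirszbraun, which you will need since the per-cluster JL is only controlled on net points), the rest of your outline goes through and matches the paper's construction.
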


The constant factor accuracy achieved by this theorem is too weak
to achieve the $1+\veps$ distortion asserted in Theorem \ref{thm:main}.
While we cannot improve condition \ref{it:bilip} to a factor of $1+\veps$,
we are able to refine it in a useful way. Roughly speaking, we introduce a 
``correction'' function $\tilG:\R\to\R$,
such that whenever $\|x-y\|_2\in[\delta r,r]$,
\begin{equation}  \label{eq:tilG}
  \frac{\|\varphi(x)-\varphi(y)\|_2} {\|x-y\|_2}
  = (1\pm\veps)\ \tilG(\tfrac{\|x-y\|_2}{r}).
\end{equation}
This function $\tilG$ does not depend on $r$,
and it equals $\Theta(1)$ in the range $[\delta,1]$ for fixed $\delta$. 
Using the correction 
function, we obtain very accurate bounds on distances in the target space,
at the price of increasing the dimension by a factor of $\tilde O(1/\veps^3)$.
This high-level idea is implemented in Theorem \ref{thm:OptDistortion},
which immediately implies Theorem \ref{thm:simple1scale},
although the precise bound therein slightly differs from 
Equation \eqref{eq:tilG}.

Embeddings for a single scale are commonly used in the embeddings literature, 
though not in the context of dimension reduction.
It is plausible that in some applications, 
a single-scale embedding may suffice, or even provide better bounds 
than our snowflake embedding (or Question \ref{q2}). 

\paragraph{Extension to {\boldmath $\ell_1$}.}

Our dimension reduction framework extends also to $\ell_1$ 
(i.e.\ $S\subset \ell_1$ and $\Phi:S\to \ell_1^k$) 
as discussed in 
Section \ref{sec:lp}.
%
The bounds we obtain therein are worse than in the $\ell_2$ case,
namely the dimension $k$ is doubly exponential in $\dim(S)$.
We remark that previous work on dimension reduction in $\ell_p$ spaces
\cite{JL84,Schechtman87, Talagrand90, Ball90, Talagrand95, Matousek96}
(with the exception of the $\ell_\infty$ snowflake of Har-Peled and Mendel \cite{HM06})
did not establish any dimension bound in terms of $\lambda(S)$;
these bounds are all expressed in terms of $n=|S|$, or of the 
dimension of $S$ as a linear subspace.

For ultrametrics, our framework provides even stronger bounds,
which resolve Question \ref{q2} in the affirmative, as follows.
Ultrametrics embed isometrically (i.e.\ with distortion 1) into $\ell_2$, 
hence Theorem \ref{thm:main} is immediately applicable.
We can then eliminate the snowflake operator (i.e.\ achieve $\alpha=1$)
by the observation that 
$(M,d)$ is an ultrametric if and only if $(M,d^2)$ is an ultrametric,
and thus Theorem \ref{thm:main} is applicable to the ultrametric $d^2$ 
with $\alpha=1/2$.
Moreover, the dimension bound can be improved by replacing some steps
with more specialized machinery.
However, in retrospect a near-optimal bound may be obtained 
by minor refinements of \cite[Lemma 12]{ABN09}.

\paragraph{Technical contribution.}

The main technical challenge is to keep both distortion and dimension 
under tight control. 
We use a relatively large number of the tools developed recently 
in the metric embeddings literature,  
combining them in a manner that yields 
a rather strong outcome ($1+\veps$ distortion).
Several of the tools we use are nonlinear, hence 
our approach can potentially be used to circumvent the limitation on 
linear embeddings pointed out by \cite{IN07}.

Our results may also be viewed as partial progress towards Question \ref{q2}: 
Observe that Theorem \ref{thm:main} answers that question positively 
in the special case where also the square of the given metric 
is known to be Euclidean (e.g. ultrametrics satisfy this condition).
Further, Theorem \ref{thm:simple1scale} achieves 
bounds that relax those required by Question \ref{q2}.
Moreover, if the answer to Question \ref{q2} is negative (which is not 
unlikely), then our results may be essentially the closest alternative.

\subsection{Related work} 
A summary of some related work on embeddings, meant to put our
results in context, is found in 
Table \ref{tab:related}. Subsequent to the public posting of this paper,
the authors of \cite{BRS11} concluded that an extension of their earlier work
from 2007, coupled with the 
framework presented here, yields a stronger version of Theorem \ref{thm:main}, where 
the target dimension is improved to 
$\tilde{O}(\veps^{-3}\tilde{\alpha}^{-2}\dim S)$. 
This additional result has been appended to the most recent version of \cite{BRS11}.
Later, Bartal and Gottlieb \cite{BG14} presented a new single-scale embedding for all
$\ell_p$, $1 \le p < 2$, and used our framework to derive a snowflake embedding for
$\ell_p$ with only polynomial dependence on the doubling dimension.

\begin{table}
\centering
\begin{tabular}{| l  | l  l | l  l l |}
\hline
Reference & Origin & Target space & Distortion & Dimension & Snowflake $\alpha$\\
\hline
\hline
\cite{JL84}	& $\ell_2$ 	& $\ell_2$ 	& $1+\veps$	& $O(\veps^{-2}\log n)$ 	& none ($\alpha=1$)\\
\cite{Assouad83} & doubling 	& $\ell_2$ 	& $2^{O(\dim S)}$ & $2^{O(\dim S)}$		& fixed $\alpha<1$	\\	
\cite{GKL03}	& doubling	& $\ell_2$	& $\tO(\dim S)$	&  $\tO(\dim S)$ & fixed $\alpha<1$	\\
\cite{HM06}	& doubling	& $\ell_\infty$	& $1+\veps$ & $\veps^{-O(\dim S)}$	& fixed $\alpha<1$ \\
\cite{ABN08}	& doubling	& $\ell_p$, $p\ge 1$	& $O(\log^{1+\veps} n)$ & $O(\veps^{-1}{\dim S})$ & none \\
Theorem \ref{thm:main}	& $\ell_2$	& $\ell_2$	& $1+\veps$  & $\tilde{O}(\veps^{-4} \dim^2 S)$ & fixed $\alpha<1$ \\
\hline
\end{tabular}
\caption{A sampling of related work; holds for arbitrary $\veps\in(0,1)$}
\label{tab:related}
\end{table}

\subsection{Applications}

In many settings, data is provided as points in $\ell_p$, 
and it is extremely advantageous to represent the data using 
a low-dimensional space.
For instance, the cost of many data processing tasks
(in terms of runtime, storage or accuracy) grows exponentially 
with the embedding dimension.
In many such cases, our machinery can reduce the embedding dimension 
to close to the data's doubling dimension,
leading to significant performance improvement.
This approach is suitable for problems
(i) that depend on pairwise distances but can tolerate small distortion; 
and 
(ii) whose algorithms depend heavily on the embedding dimension, so that
the improved performance given by the lower dimension outweighs the
overhead cost of computing the dimensionality reduction.

We provide in Section \ref{sec:app} two examples where our 
dimensionality reduction results have immediate algorithmic applications.
The first one is an approximate Distance Labeling Scheme, 
where the main complexity measure is the storage required at each network node.
The second example is approximation algorithms for clustering algorithms,
where running time is typically exponential in the dimension.
In both cases, the final approximation obtained is $1+\veps$.

On a more conceptual level, our embeddings may explain a common 
empirical phenomenon regarding low-dimensional data: Many
heuristics that represent (non-Euclidean) input data as points in 
Euclidean space find that low-dimensional 
Euclidean space is sufficient to yield a fair representation, see 
e.g. \cite{NZ02} for networking and \cite{TSL00,RS00} for machine learning. 
Our results can be interpreted as conveying the following principle: 
Intrinsically low-dimensional data that admits a meaningful 
representation in $\ell_2$, can actually be represented in 
\emph{low-dimensional} $\ell_2$.

\paragraph{Implementation.} 

All our embedding results are algorithmic --- 
they are constructive and can be computed in polynomial time. 
The details are mostly straightforward, and we do not address 
this issue explicitly. It is possible that the running time may
be improved further, and perhaps even be brought close to linear. 
(For example, the Gaussian transform is computed quickly via the 
Gram matrix.) Two nontrivial steps in this direction are the 
implementation of Kirszbraun's Theorem, which is usually solved
as a semidefinite program, and our use of padded partitions, 
which require an application of the \Lovasz Local Lemma.

\section{Preliminaries and tools}
\label{sec:prelim}


\noindent{\bf Doubling dimension.}
For a metric $(X,d)$, let $\lambda$ be the infimum value such that the points of
every ball in $X$ can be covered by $\lambda$ balls of half the radius.
(Here, a ball may be centered at any point in the ambient space in which $X$ resides.) 
The {\em doubling dimension} of $X$ is dim$(X)=\log_2\lambda$. A metric is {\em 
doubling} when its doubling dimension is finite. The following property 
can be demonstrated via a repetitive application of the doubling property \cite{KL04}.

\begin{property}\label{prop:doubling}
For set $X$ with doubling constant $\lambda$, if the minimum 
interpoint distance in any $S \subset X$ is at least $\alpha$, 
and the diameter of $S$ is at most 
$\beta$, then $|S| \le \lambda^{O(\log (\beta/\alpha))}$.
\end{property}


\noindent{\bf $\epsilon$-nets.}
For a point set $S$, an $\epsilon$-net of $S$ is a subset $T \subset S$ with the following
properties:
(i) Packing: For every pair $u,v \in T$, $d(u,v) \ge \epsilon$.
(ii) Covering: Every point $u \in S$ is strictly within distance 
$\epsilon$ of some point $v \in T$: $d(u,v) < \epsilon$.

\paragraph{Lipschitz norm.}
Let $(X,d_X)$ and $(Y,d_Y)$ be metric spaces.
A function $f:X\to Y$ is said to be {\em $K$-\lip} (for $K>0$)
if for all $x,x'\in X$ we have $d_Y(f(x),f(x')) \le K\cdot d_X(x,x')$.
The {\em \lip constant} (or \lip norm) of $f$, denoted $\lipnorm{f}$,
is the infimum over $K>0$ satisfying the above.
A $1$-\lip function is called in short \lip.
We recall the following basic property of Lipschitz functions:
Let $f:X\to \ell_2^k$ and $g:X\to\R$.
Then their product $fg:x\to g(x)\cdot f(x)$ has \lip norm
\[
  \lipnorm{fg} \leq
  \lipnorm{f}\cdot \max_x |g(x)| + \lipnorm{g}\cdot \max_x \|f(x)\|.
\]

\remove{
We enumerate in Appendix \ref{sec:lip-prop} some immediate properties 
of Lipschitz functions that we will need.
For example, 
for all $f_1,\ldots,f_m:X\to \ell_2^k$, the sum
$\sum_{i=1}^m f_i$ which maps $x\to f_1(x)+\ldots+f_m(x)\in\ell_2^k$
has \lip norm
$\lipnorm{\sum_{i=1}^m f_i}
  \leq (\sum_{i=1}^m \lipnorm{f_i})^{1/2}$.
}

\smallskip
\noindent {\bf Extension Theorem.}
The Kirszbraun Theorem \cite{Kir34} states that if $S$ and $X$ are Euclidean spaces, $T \subset S$, and
there exists a Lipschitz function $f: T \to X$; then there exists a function
$\tilde{f}: S \to X$ that has the same Lipschitz constant as $f$ and also {\em extends} $f$, i.e.\ $\tilde{f}|_T = f$,
meaning that the restriction of $\tilde{f}$ to $T$ is identical to $f$.


\smallskip 
\noindent {\bf Bounded distances and the Gaussian Transform.} 
A \emph{metric transform} maps a distance function to another 
distance function on the same set of points (e.g. maps $(X,d)$ to 
$(X,d^{1/2})$).
We say that a metric transform is \emph{bounded} (by $T>0$) if it always 
results with a distance function where all interpoint distances are 
bounded (by $T>0$).
The \emph{Gaussian transform} is a metric transform that maps value $t$ to 
$G_r(t)=r(1-e^{-t^2/r^2})^{1/2}$, where $r>0$ is a parameter. 
Schoenberg \cite{Sch38a,Sch38b,DL97} showed that 
the Gaussian transform maps Euclidean spaces to Euclidean spaces. 
That is, for every $r>0$ and $X\subset L_2$ there is an embedding 
$g:X \to L_2$ such that for all $x,y\in X$ 
we have $\|g(x)-g(y)\|_2= G_r(\|x-y\|_2)$.
It is easily verified that 
\begin{equation} \label{eq:Glip}
  G_r(t) \le t,\qquad \forall t\ge 0,
\end{equation}
and thus $\lipnorm{g} \le 1$.
In addition, $G_r(t)\le r$ for all $t$, 
hence the Gaussian transform is bounded.
The idea of using bounded transforms for embeddings is due to 
\cite{BRS11}.

\paragraph{Probabilistic partitions.} 

Probabilistic partitions are a common tool used in embeddings. Let $(X,d)$ 
be a finite metric space. A partition $P$ of $X$ is a collection of 
non-empty pairwise disjoint clusters $P=\{C_1,C_2,\ldots,C_t\}$ such 
that $X=\cup_jC_j$. For $x\in X$ we denote by $P(x)$ the cluster 
containing $x$.

We will need the following decomposition lemma due to Gupta, Krauthgamer 
and Lee \cite{GKL03}, Abraham, Bartal and Neiman \cite{ABN08}, and Chan, 
Gupta and Talwar \cite{CGT10}. Let 
$B(x,r) = \{y| \, \|x-y\| \le r \}$.

\begin{theorem}[Padded Decomposition of doubling metrics \cite{GKL03,ABN08,CGT10}] 
\label{thm:decomp}
There exists a constant $c_0>1$,
such that for every metric space $(X,d)$, every $\veps\in(0,1)$, 
and every $\Delta > 0$, there is a
multi-set $\mathcal D = [P_1, \ldots, P_m]$ of partitions of $X$,
with $m \leq c_0 \veps^{-1} \dim(X)\log \dim(X)$, such that
\begin{enumerate} \compactify
\item Bounded radius: $\diam(C) \leq \Delta$ for all clusters $C \in \bigcup_{i=1}^m P_i$.
\item Padding: If $P$ is chosen uniformly {from} $\mathcal D$, then for all $x \in X$,
$$
\Pr_{P\in\mathcal D} [B(x,\tfrac{\Delta}{c_0\dim(X)}) \subseteq P(x)] \geq 1-\veps.
$$
\end{enumerate}
\end{theorem}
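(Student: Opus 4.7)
The plan is to combine a standard CKR-style random partition, which yields a constant-probability padding guarantee at scale $\Theta(\Delta/\dim(X))$, with a derandomized boost via the Lov\'asz Local Lemma to produce a deterministic multi-set $\mathcal D$ meeting the uniform $(1-\veps)$-fraction guarantee.

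\textbf{Step 1: a single random partition.} Fix a $\Theta(\Delta/\dim(X))$-net $N\subseteq X$, sample a radius $R$ uniformly from $[\Delta/2,\Delta]$, and draw a uniformly random permutation $\pi$ of $N$. Assign each $y\in X$ to the cluster anchored at the first $x\in N$ (in $\pi$-order) with $d(x,y)\le R$. The bounded-radius property (1) is automatic: every cluster lies in a ball of radius $R\le\Delta$. For padding at threshold $\tau$, the standard CKR bookkeeping (partitioning the contribution of each potentially ``cutting'' anchor by its rank in $\pi$) gives
$$
\Pr\!\left[\,B(y,\tau)\not\subseteq P(y)\,\right] \;\le\; O(\tau/\Delta)\cdot\log\bigl|N\cap B(y,2\Delta)\bigr|.
$$
Property \ref{prop:doubling} bounds $|N\cap B(y,2\Delta)|$ solely in terms of $\dim(X)$, so for $\tau=\Delta/(c_0\dim(X))$ with $c_0$ a sufficiently large absolute constant, the right-hand side drops below $1/2$; hence a single random partition $\tau$-pads each fixed $y$ with probability at least $1/2$.

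\textbf{Step 2: boosting via LLL.} Draw $m$ independent copies $P_1,\dots,P_m$ of the above partition, using fresh and \emph{local} randomness each time --- for instance, assign each $x\in N$ an independent radius $R_x^i\in[\Delta/2,\Delta]$ for every trial $i$. For each $y\in X$ let $A_y$ be the bad event that $y$ is $\tau$-padded in strictly fewer than $(1-\veps)m$ of the trials. When the per-partition padding probability is pushed past $1-\veps/2$ --- achievable by tightening the construction in Step 1, which is ultimately what forces the $\log\dim(X)$ overhead --- a Chernoff tail bound gives $\Pr[A_y]\le e^{-\Omega(\veps m)}$. Crucially, $A_y$ depends only on the random radii attached to net points in $N\cap B(y,2\Delta)$, so $A_y$ and $A_z$ are independent whenever $d(y,z)>4\Delta$; by doubling, the dependency-graph degree (after restricting to a coarse net of $X$) is at most $2^{O(\dim X)}$. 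The symmetric Lov\'asz Local Lemma then produces a deterministic realization in which all $A_y$ fail simultaneously, provided $\Pr[A_y]\cdot 2^{O(\dim X)}=o(1)$; this forces $m=O(\veps^{-1}\dim(X)\log\dim(X))$, matching the stated bound.

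\textbf{Main obstacle.} The technical crux is that bare CKR only delivers a \emph{constant} per-partition padding probability at the natural scale $\Delta/\dim(X)$, whereas the conclusion demands a $(1-\veps)$-fraction guarantee across $\mathcal D$. Closing this gap requires boosting the per-partition probability to $1-\veps/2$ so that the Chernoff tail can dominate the $2^{O(\dim X)}$ dependency-graph degree inside LLL, and the $\log\dim(X)$ factor in $m$ is precisely the price of balancing Chernoff against doubling-degree in LLL. A minor additional step is to propagate the padding bound from the net $N$ to all of $X$, which follows from the triangle inequality since every $y\in X$ lies within $\Delta/\dim(X)$ of a net point and one can rescale $c_0$ accordingly.
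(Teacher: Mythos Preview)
The paper does not give its own proof of this theorem; it cites \cite{GKL03,ABN08} and remarks that the two enumerated properties follow from \cite[Lemma~2.7]{ABN08} while the support-size bound on $m$ follows from the Lov\'asz Local Lemma sketched there. Your two-step outline---a CKR-type random partition, then independent copies plus Chernoff plus LLL to extract a small deterministic multiset---is consistent with that remark.

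There is, however, a real gap in your Step~2. You assert that the per-partition padding probability can be ``pushed past $1-\veps/2$'' at radius $\Delta/(c_0\dim X)$ with a \emph{universal} $c_0$, and you attribute the $\log\dim X$ overhead in $m$ to this boosting. Neither claim is correct. A simple averaging argument on the line (where $\dim X=O(1)$) shows that for \emph{any} distribution over diameter-$\Delta$ partitions, some point $x$ satisfies $\Pr[x\text{ not $\tau$-padded}]\ge\Omega(\tau/\Delta)$; at $\tau=\Delta/(c_0\dim X)$ this is $\Omega(1/c_0)$, a fixed constant independent of $\veps$. Once $\veps$ drops below this threshold, i.i.d.\ sampling plus Chernoff cannot produce a $(1-\veps)$-fraction guarantee, since Chernoff only concentrates the padded fraction near its mean, which is bounded away from~$1$. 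The actual route is to let the padding radius shrink with $\veps$: CKR gives per-partition padding probability $\ge 1-\veps/2$ at radius $\Theta(\veps\Delta/\dim X)$, after which your Chernoff\,+\,LLL argument goes through verbatim. (In the paper's sole application, Step~2 of Section~\ref{sec:l2BetterSingleScale}, this $\veps$-dependence in the radius is harmless and absorbed into the hidden constants.) As a secondary point, the $\log\dim X$ factor in $m$ arises not from ``tightening Step~1'' but from the LLL dependency degree: to propagate the padding guarantee from a net to all of $X$ via the triangle inequality, the net must be at scale comparable to the padding radius $\Theta(\Delta/\dim X)$, and the number of such net points within distance $O(\Delta)$ is $\lambda^{O(\log\dim X)}=2^{O(\dim X\,\log\dim X)}$; balancing this against the Chernoff tail $e^{-\Omega(\veps m)}$ is what forces $m=\Theta(\veps^{-1}\dim X\log\dim X)$.
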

Remark: \cite{GKL03} provided slightly different quantitative bounds
than in Theorem \ref{thm:decomp}. 
The two enumerated properties follow, for example, from Lemma 2.7 in 
\cite{ABN08}, and the bound on support-size $m$ follows by an application 
of the \Lovasz\ Local Lemma sketched therein.

\section{Dimension Reduction for {\boldmath $\ell_2$}}
\label{sec:l2} 

In this section we first design a single scale embedding
that achieves distortion $1+\veps$ after including a correction function.
This result is stated in Theorem \ref{thm:OptDistortion} below,
which is a refined version of Theorem \ref{thm:simple1scale}.
We then use this single scale embedding to prove Theorem \ref{thm:main}
in Section \ref{sec:l2betterSnowflake}.
Throughout this section, the norm notation $\|\cdot\|$ denotes $\ell_2$-norms.
We make no attempt to optimize constants.
Following Section \ref{sec:prelim},
define $G:\R\to\R$ by $G(t)=(1-e^{-t^2})^{1/2}$,
and let
$$
   G_r(t)=r\cdot G(t/r)=r(1-e^{-t^2/r^2})^{1/2}.
$$

\begin{theorem} \label{thm:OptDistortion}
For every scale $r>0$ and every $0<\delta,\veps<1/4$,
every finite set $S\subset \ell_2$ admits an embedding 
$\varphi:S\to \ell_2^k$ for $k=\tilde O(\veps^{-3}\log\tfrac1\delta\cdot (\dim S)^2)$,
satisfying:
\begin{enumerate} \compactify
\renewcommand{\theenumi}{(\alph{enumi})}
\item\label{it:lipOpt}
\lip condition:
$\|\varphi(x)-\varphi(y)\| \le \|x-y\|$
for all $x,y\in S$.
\item\label{it:bilipOpt2}
$1+\veps$ distortion to the Gaussian (at scales near $r$):
  For all $x,y\in S$ with $\delta r \le \|x-y\|\le \frac{r}{\delta}$,
$$
  \frac{1}{1+\veps}
  \le \frac{\|\varphi(x)-\varphi(y)\|}{G_r(\|x-y\|)} \le 1;
$$
\item\label{it:thresholdOpt}
Boundedness: $\|\varphi(x)\| \leq r$ for all $x\in S$.
\end{enumerate}
\end{theorem}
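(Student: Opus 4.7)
The starting observation is that the Gaussian-transformed metric $d'(x,y) := G_r(\|x-y\|)$ is, by Schoenberg's theorem, isometrically realizable in $\ell_2$; is $1$-Lipschitz in the original $\ell_2$-distance (since $G_r(t)\le t$); and is globally bounded by $r$. Consequently, any embedding $\varphi\colon S\to\ell_2^k$ that satisfies $\|\varphi(x)-\varphi(y)\| \in [d'(x,y)/(1+\veps),\, d'(x,y)]$ on the scale range $\delta r\le \|x-y\|\le r/\delta$ automatically fulfils properties~\ref{it:lipOpt} and~\ref{it:bilipOpt2}, while property~\ref{it:thresholdOpt} follows by translating so that some anchor point $x_0\in S$ maps to the origin, since then $\|\varphi(x)\|\le d'(x,x_0)\le r$. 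The whole proof therefore reduces to manufacturing such a $\varphi$ of dimension $K = \tilde O(\veps^{-3}\log(1/\delta)(\dim S)^2)$.

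\paragraph{Construction.}
First I take a $(\delta r)$-net $N\subseteq S$ and let $g\colon S\to \ell_2$ be the (possibly infinite-dimensional) isometric realization of $d'$ granted by Schoenberg. Next I apply the padded decomposition of Theorem~\ref{thm:decomp} to $(N,d)$ with diameter parameter $\Delta = c_1 r\dim(S)/\delta$ for a sufficiently large constant $c_1$, and padding parameter $\veps$; this produces $m = O(\veps^{-1}\dim(S)\log\dim(S))$ partitions $P_1,\dots,P_m$ whose padding radius $\Omega(\Delta/\dim(S))$ comfortably exceeds $r/\delta$. By Property~\ref{prop:doubling}, each cluster $C$ of any $P_i$ has $|C|\le 2^{\tilde O(\dim(S)\log(1/\delta))}$ net points, so applying Theorem~\ref{thm:JL} to $g(C)\subseteq \ell_2$ produces a per-cluster map $\psi_{i,C}\colon C\to \ell_2^{k_i}$ with $(1+\veps)$-distortion in $d'$, where $k_i = \tilde O(\veps^{-2}\dim(S)\log(1/\delta))$. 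I then stitch these into a per-partition map $\psi_i$ by translating each $\psi_{i,C}$ so it vanishes at a designated anchor of $C$ and multiplying by a Lipschitz bump function $\rho_{i,C}$ supported on $C$ (equal to $1$ on the padded core and decaying to $0$ near the boundary); concatenation and normalization give $\psi := m^{-1/2}(\psi_1,\dots,\psi_m)\colon N\to \ell_2^{K}$ with $K = m k_i = \tilde O(\veps^{-3}(\dim S)^2\log(1/\delta))$, matching the claimed bound. The padding property guarantees that for every pair $(x,y)\in N\times N$ at scale $\le r/\delta$, at least $(1-\veps)m$ partitions place $x,y$ in a common cluster, and a squared-sum analysis then yields $\|\psi(x)-\psi(y)\| = (1\pm O(\veps))\,d'(x,y)$. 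I finish by extending $\psi$ from $N$ to $S$ via Kirszbraun (which preserves the Lipschitz norm), rescaling by $1/(1+O(\veps))$ so that the upper bound $\|\varphi(x)-\varphi(y)\|\le d'(x,y)$ holds strictly, and finally translating so that $\varphi(x_0)=0$.

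\paragraph{Main obstacle.}
The delicate step is the stitching. Because distinct clusters use independent random JL-subspaces, a pair $(x,y)$ split across two clusters of $P_i$ can suddenly move by $\Omega(r)$ in $\psi_i$, and summed over the $\veps m$ ``bad'' partitions this could contribute $\Omega(\veps r^2)$ of additive error---enough to wreck the multiplicative $(1+O(\veps))$ guarantee at small scales such as $d'(x,y)=\Theta(\delta r)$. The bump functions $\rho_{i,C}$ must therefore be carefully tuned: they must have Lipschitz constant $O(1/r)$ (e.g.\ $\rho_{i,C}(x)\propto \min\{1, d(x,\partial C)/r\}$) so that the Lipschitz product rule recalled in Section~\ref{sec:prelim} makes $\rho_{i,C}\cdot \psi_{i,C}$ globally $O(1)$-Lipschitz in $d$, while simultaneously equalling $1$ on the padded core so that the $(1+\veps)$-distortion of $\psi_{i,C}$ is inherited by $\psi_i$ for typical same-cluster pairs. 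Verifying that the boundary contributions from the $\veps m$ bad partitions sum to at most $O(\veps)$ relative error, uniformly over the full scale range $[\delta r,r/\delta]$, is the core calculation and pins down the precise dimension $\tilde O(\veps^{-3}\log(1/\delta)(\dim S)^2)$: one factor of $\dim S$ from the partition count, another from the log-size of each cluster, and the three $\veps^{-1}$ factors from partition count, JL accuracy, and the padding-failure union bound.
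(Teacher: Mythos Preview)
Your approach is essentially the paper's own seven-step construction: net extraction, padded decomposition of the net, per-cluster Gaussian transform followed by JL, smoothing by a Lipschitz bump, $\ell_2$-direct-sum over the $m$ partitions, and Kirszbraun extension to all of $S$. Two of your parameter choices differ from the paper, and one of them creates a genuine gap.

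The real gap is the net granularity. With only a $(\delta r)$-net, the Kirszbraun step cannot recover the lower bound in \ref{it:bilipOpt2} for non-net points at the bottom of the scale range. If $x,y\in S$ have $\|x-y\|=\delta r$ and $x',y'\in N$ are their nearest net points, then $\|x-x'\|,\|y-y'\|$ may each be close to $\delta r$, and the triangle-inequality transfer
\[
  \|\varphi(x)-\varphi(y)\| \;\ge\; \|\varphi(x')-\varphi(y')\| - \|x-x'\| - \|y-y'\|
\]
loses an additive $\Theta(\delta r)$, which is the same order as the target value $G_r(\delta r)\approx \delta r$. (In fact $x'$ and $y'$ may coincide, and then Kirszbraun gives no lower bound at all on $\|\varphi(x)-\varphi(y)\|$.) The paper uses an $(\veps\delta r)$-net precisely so that this additive loss is $O(\veps)\cdot G_r(\|x-y\|)$ uniformly over the range; the finer net inflates each cluster's log-cardinality only by $\log(1/\veps)$, which disappears into the $\tilde O$.

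A smaller issue is your bump of Lipschitz constant $O(1/r)$. The paper instead takes $\min\{1,(\delta/r)h_C(x)\}$, Lipschitz constant $\delta/r$. With your choice the product rule alone gives, in the ``same cluster, not padded'' case, $\|\varphi_i(x)-\varphi_i(y)\|\le \|x-y\|+G_r(\|x-y\|)$, which at $\|x-y\|\approx r/\delta$ exceeds $G_r$ by a factor $\Theta(1/\delta)$ and would turn the $\veps$-fraction of bad partitions into an $O(\veps/\delta^2)$ relative error. This is salvageable by also invoking the per-coordinate boundedness $\|\varphi_i(\cdot)\|\le r$ (so $\|\varphi_i(x)-\varphi_i(y)\|\le 2r\le O(1)\cdot G_r$ at the large scales), but the product-rule argument you sketch in the obstacle paragraph is not sufficient on its own; the paper's $\delta/r$ choice makes every bad-partition contribution $O(1)\cdot G_r$ directly.
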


We note that to prove Theorem \ref{thm:OptDistortion}, 
it in fact suffices to demonstrate an embedding $\varphi'$
with the same dimension and boundedness guarantess, but 
which is $(1+a\veps)$-\lip and guarantees $(1+b\veps)$ distortion to the Gaussian
(for arbitrary constants $a,b$).
We scale down $\varphi'$ by a factor of
$(1+a\veps)$, so that resulting function is 1-\lip. 
Then $\varphi$ is recovered by taking the scaled $\varphi'$ with parameter 
$\veps' = (1+a\veps)(1+b\veps) - 1 = \Theta(\veps)$:
This yields $(1+\veps)$ distortion to the Gaussian, 
and has no asymptotic effect on the dimension.


\subsection{Embedding for a single scale}
\label{sec:l2BetterSingleScale}

Our construction of the embedding $\varphi$ for Theorem \ref{thm:OptDistortion}
proceeds in seven steps, as described below.
Let $\lambda=\lambda(S)$.
All the hidden constants are absolute, 
i.e.\ independent of $|S|$, $\lambda$, $\veps$, $\delta$ and $r$. 
It is plausible that the dependence of target dimension on $\log \lambda$
can be improved to be near-linear, by carefully combining some of these steps.

\begin{description} \compactify
\item[Step 1 (Net Extraction):]
Let $N\subseteq S$ be an $(\veps \delta r)$-net in $S$.

\item[Step 2 (Padded Decomposition):]
Compute for $N$ a padded decomposition with padding $\frac{3r}{\delta}$.
More specifically,
by Theorem~\ref{thm:decomp}, there is a multiset $[P_1,\ldots,P_m]$
of partitions of $N$, where every point is 
$\frac{3r}{\delta}$-padded in $1-\veps$ fraction of the partitions,
all clusters have diameter bounded by 
$\Delta
= c_0 \frac{r}{\delta} \log\lambda
=O(\frac{r}{\delta}\log\lambda)$,
and 
$m 
= c_0 \veps^{-1}\log \lambda \log \log \lambda
= O(\veps^{-1}\log \lambda \log \log \lambda)$.

\item[Step 3 (Bounding Distances):]
In each partition $P_i$ and each cluster $C\in P_i$,
bound the interpoint distances in $C$ at maximum value $r$.
Specifically, using a Gaussian transform as per Section 
\ref{sec:prelim}, 
obtain a map $g_C: C \to \ell_2$ such that
\[
  \|g_C(x)-g_C(y)\|_2^2 
  = G_r(\|x-y\|)^2 
  = r^2(1-e^{-\|x-y\|_2^2/r^2}),\quad \forall x,y\in C.
\]

\item[Step 4 (Dimension Reduction):]
For each partition $P_i$ and each cluster $C\in P_i$,
the point set $g_C(C)\in\ell_2$ admits a dimension reduction,
with distortion $1+\veps$.
Specifically, by the JL-Lemma there is a map
$\Psi_{\JL}: g_C(C)\to \ell_2^{k'}$
such that
\begin{equation} \label{eq:singleScaleJL}
  \tfrac{\|t-t'\|}{1+\veps} \le \| \Psi_{\JL}(t)-\Psi_{\JL}(t') \| \le \|t-t'\|,
  \quad \forall t,t'\in g_C(C),
\end{equation}
and the target dimension is (using Property \ref{prop:doubling})
\begin{align*}
k' &=O(\veps^{-2} \log |C|)	\\
   &= O(\veps^{-2} \log (\lambda^{O(\log (\Delta/\veps \delta r))}))	\\
   &= O(\veps^{-2} [\log \left(\frac{\log \lambda}{\veps \delta^2} \right) \log \lambda])	\\
   &= O(\veps^{-2} \log\tfrac{1}{\veps\delta} \cdot \log\lambda \log\log\lambda).
\end{align*}
Composing the last two steps, define
$f_C=\Psi_{\JL} \circ g_C$ mapping $C\to \ell_2^{k'}$.

\item[Step 5 (Gluing Clusters):]
For each partition $P_i$, ``glue'' the cluster embeddings $f_C$
by smoothing them near the boundary.
Specifically, for each cluster $C\in P_i$,
assume by translation that $f_C$ attains the origin,
i.e.\ there exists $z_C\in C$ such that $\|f_C(z_C)\|=0$.
Define $h_C:C\to \R$ by $h_C(x) = \min_{y\in N\setminus C} \|x-y\|$,
as a proxy for $x$'s distance to the boundary of its cluster.
Now define $\varphi_i:N\to \ell_2^{k'}$ by
\[
  \varphi_i(x) = f_{P_i(x)}(x) \cdot \minn{1,\ \tfrac{\delta}{r} h_{P_i(x)}(x)};
\]
recall that $P_i(x)$ is the unique cluster $C\in P_i$ containing $x$.

\item[Step 6 (Gluing Partitions):]
Combine the maps obtained in the previous step via direct sum and scaling.
Specifically, define $\varphi:N\to\ell_2^{mk'}$ by
$\varphi = m^{-1/2} \bigoplus^m_{i=1} \varphi_i$.

\item[Step 7 (Extension beyond the Net):]
Use the Kirszbraun theorem to extend the map $\varphi$ to all of $S$,
without increasing the \lip constant.
\end{description}

\subsection{Proof of Theorem \ref{thm:OptDistortion}}
\label{sec:l2BetterSingleScaleAnalysis}

Let us show the embedding $\varphi$ constructed above indeed satisfies 
the conclusion of Theorem \ref{thm:OptDistortion}. 
By construction, the target dimension is
$mk'
  = O(\veps^{-3} \log\tfrac{1}{\veps\delta} (\log\lambda \log\log\lambda)^2)
$.

We first focus on points in the net $N$, 
and later extend the analysis to all points in $S$.
Let us start with a few observations, 
Lemmata \ref{lem:boundG} and \ref{lem:singleScale}.

\begin{lemma} \label{lem:boundG}
For $r,t > 0$,
\begin{enumerate} \compactify
\renewcommand{\theenumi}{(\roman{enumi})}
\item
$G_r(t)$ is monotone increasing in $t$ and $r$.
\item
$\frac{G_r(t)}{t}$ is monotone decreasing in $t$
\item
Let $0<\eta<1/3$ and suppose $0< t'\le (1+\eta)t$.
Then $\frac{G_r(t')}{G_r(t)} \le 1+3\eta$.
\end{enumerate}
\end{lemma}

The proof of Lemma \ref{lem:boundG} is found in Appendix \ref{sec:A}.

\begin{lemma} \label{lem:singleScale}
For every $x,y\in N$ and every $i\in\{1,\ldots,m\}$,
\begin{enumerate} \compactify
\renewcommand{\theenumi}{(\roman{enumi})}
\item \label{it:part1}
$\|f_{P_i(x)}(x)\| \le r$.
\item \label{it:part2}
If $P_i(x)=P_i(y)=C$ then 
$\|f_C(x)-f_C(y)\| \le G_r(\|x-y\|) \le \|x-y\|$.
\item \label{it:part3}
If $P_i(x)\neq P_i(y)$ then $h_{P_i(x)}(x) \leq \|x-y\|$.
\end{enumerate}
\end{lemma}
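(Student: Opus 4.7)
The plan is to verify each of the three items directly from the constructions in Steps 3--5, as they are essentially bookkeeping consequences of the Gaussian transform bound $G_r(t)\le \min\{t,r\}$, the one-sided JL bound, and the definition of $h_C$. None of the three should pose a real obstacle; the only thing to be careful about is that $f_C$ is defined by composing a Gaussian transform with $\Psi_\JL$ followed by the translation that sends a chosen anchor $z_C$ to the origin, and that $h_C$ is defined with respect to net points outside the cluster $C$.

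For \ref{it:part2}, I would start from the definition $f_C = \Psi_\JL \circ g_C$ and simply chain the two upper bounds: the JL bound \eqref{eq:singleScaleJL} gives $\|f_C(x)-f_C(y)\| = \|\Psi_\JL(g_C(x)) - \Psi_\JL(g_C(y))\| \le \|g_C(x)-g_C(y)\|$, and by construction of $g_C$ (Step 3) this equals $G_r(\|x-y\|)$. The second inequality is just \eqref{eq:Glip}.

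For \ref{it:part1}, I would exploit the translation fixed in Step 5 that places $f_C(z_C)$ at the origin. Then $\|f_{P_i(x)}(x)\| = \|f_{P_i(x)}(x) - f_{P_i(x)}(z_{P_i(x)})\|$, and applying \ref{it:part2} with $y=z_{P_i(x)}$ bounds this by $G_r(\|x-z_{P_i(x)}\|)$, which is at most $r$ since $G_r(t)=r(1-e^{-t^2/r^2})^{1/2}\le r$ for every $t\ge 0$.

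For \ref{it:part3}, I would unwind the definition $h_{P_i(x)}(x) = \min_{y'\in N\setminus P_i(x)} \|x-y'\|$. Since both $x,y\in N$ but $P_i(y)\neq P_i(x)$, the point $y$ itself lies in $N\setminus P_i(x)$, so the minimum above is bounded by $\|x-y\|$. This finishes the lemma; the argument is entirely straightforward, and the only subtlety worth flagging is that \ref{it:part1} implicitly uses the anchoring in Step 5, without which $\|f_C(x)\|$ would not be controlled.
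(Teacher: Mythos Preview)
Your proof is correct and matches the paper's own argument essentially line for line: each item is handled by the same direct appeal to \eqref{eq:singleScaleJL}, the Gaussian transform bound $G_r(t)\le\min\{t,r\}$, and the definition of $h_C$, with the translation to $f_C(z_C)=0$ used for \ref{it:part1}. The only cosmetic difference is that you prove \ref{it:part2} first and then invoke it for \ref{it:part1}, whereas the paper simply repeats the one-line computation in place.
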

\begin{proof}
For assertion \ref{it:part1}, recall that by the translation of Step 5, 
every cluster, and in particular $C=P_i(x)$, contains a point $z_C\in N$
such that $f_C(z_C)=0$. 
Thus, using Equation \eqref{eq:singleScaleJL} and noting that 
$G_r(t) \le r$, we have
\[ \|f_C(x)-f_C(z_C)\| 
  \le \|g_C(x)-g_C(z_C)\| 
  =  G_r(\|x-z_C\|)
  \le r.
\]
To prove the assertion \ref{it:part2}, 
use Equations \eqref{eq:Glip} and \eqref{eq:singleScaleJL}, to get
\[ 
  \|f_C(x)-f_C(y)\|
  \le \|g_C(x)-g_C(y)\|
  \le G_r(\|x-y\|)
  \le \|x-y\|.
\]
For assertion \ref{it:part3}, since $C = P_i(x)\neq P_i(y)$ we have that
$y \in N \setminus C$, and so 
$h_C(x) = \min_{z\in N\setminus C} \|x-z\| \le \|x-y\|$.
\end{proof}

\paragraph{Analysis for the net $N$.}
We now prove assertions (a)-(c) of Theorem \ref{thm:OptDistortion}
for (only) net points. 
(We shall need this later to complete the proof of the theorem.)
To this end, fix $x,y\in N$.

\begin{enumerate}
\item[\ref{it:lipOpt}] \lip:
If $\|x-y\| > \frac{r}{\delta}$, we use the boundedness condition 
(assertion (c) of Theorem \ref{thm:OptDistortion}) and the fact that 
$\delta \le \frac{1}{4}$ to get
\[
  \|\varphi(x) - \varphi(y)\| 
  \le \|\varphi(x)\| + \|\varphi(y)\| 
  \le 2r 
  < \tfrac{r}{\delta} \le \|x-y\|.
\]
Assume now that $\|x-y\| \le \frac{r}{\delta}$. Then by Step 6
\begin{equation} \label{eq:singleScaleCombine}
  \|\varphi(x)-\varphi(y)\|^2
  = \tfrac{1}{m} \sum_{i=1}^m   \|\varphi_i(x)-\varphi_i(y)\|^2.
\end{equation}
To bound the righthand side, fix $i\in\{1,\ldots,m\}$
and consider separately the following three cases.
\begin{description}
\item[Case 1:] $x$ is padded.
The padding is $\tfrac{3r}{\delta}$, 
hence $x$ and $y$ belong to the same cluster $C=P_i(x)=P_i(y)$.
This padding implies that $h_{C}(x) \ge \tfrac{3r}{\delta}$
(by definition, see Step 5), and thus 
$\varphi_i(x) = f_{C}(x) \cdot \min \{1,h_{C}(x) \} = f_{C}(x)$.
Similarly, the triangle inequality implies
$h_{C}(y) \ge h_C(x) - \|x-y\| \ge \tfrac{2r}{\delta}$, and
thus $\varphi_i(y) = f_{C}(y)$ as well.
Using Lemma \ref{lem:singleScale}\ref{it:part2}
\[
  \|\varphi_i(x)-\varphi_i(y)\|
  = \|f_C(x)-f_C(y)\| 
  \le \|x-y\|.
\]

\item[Case 2:] $x$ is not padded and $P_i(x)\neq P_i(y)$.
By Lemma \ref{lem:singleScale}\ref{it:part1}\ref{it:part3},
\[
  \|\varphi_i(x)\|
  \le  \| f_{P_i(x)}(x)\| \cdot \tfrac{\delta}{r} h_{P_i(x)}(x)
  \le \delta h_{P_i(x)}(x)
  \le \delta \|x-y\|.
\]
A similar bound holds for $\varphi_i(y)$, and we obtain
\[
  \|\varphi_i(x) - \varphi_i(y)\|
  \leq \|\varphi_i(x)\| + \|\varphi_i(y)\|
  \leq 2\delta\|x-y\|
  \leq \|x-y\|.
\]

\item[Case 3:] $x$ is not padded and $x,y$ belong to the same cluster 
$P_i(x)=P_i(y)=C$.
Restrict $\varphi_i$ to $C$ and 
write it as the product of the two functions $z\mapsto f_{C}(z)$ 
and $\tilh_C:z \mapsto \minn{1,\ \tfrac{\delta}{r} h_C(z)}$.
It follows that
\[
  \frac{\|\varphi_i(x) - \varphi_i(y)\|}{\|x-y\|}
  \le \lipnorm{f_C}\cdot \max_{x\in C} |\tilh_C(x)|
  + \lipnorm{\tilh_C} \cdot \max_{x\in C} \|f_C(x)\|
\]
Now $\lipnorm{f_C} \le 1$ and also $\lipnorm{h_C} \le 1$
(where the first assertion follows the fact that
$\lipnorm{g} \le 1$ in conjunction with Equation \eqref{eq:singleScaleJL},
and the second as a consequence of the triangle inequality).
By definition $\max_{z\in C} |\tilh_C(z)| \le 1$, and 
it is easy to verify that 
$\lipnorm{\tilh_C} 
  \le \tfrac{\delta}{r}\cdot \lipnorm{h_C} 
  \le \tfrac{\delta}{r}$.
Plugging in these estimates and the bound on $f_C$ obtained from 
Lemma \ref{lem:singleScale}(i), we have 
\[
  \frac{\|\varphi_i(x) - \varphi_i(y)\|}{\|x-y\|}
  \le 1\cdot 1 + \tfrac{\delta}{r}\cdot r 
  = 1+\delta.
\]
\end{description}
Now combine these three cases by plugging into 
Equation \eqref{eq:singleScaleCombine}.
Since $x$ is padded in at least $1-\veps$ fraction of the partitions $P_i$,
and for the remaining partitions we can use the worst bound among the 
three cases, we get
\[
  \|\varphi(x)-\varphi(y)\|^2
  \leq (1-\veps)\|x-y\|^2 + \veps (1+\delta)^2 \|x-y\|^2
  < (1+3\veps\delta) \|x-y\|^2.
\]

\item[\ref{it:bilipOpt2}] Distortion to the Gaussian:
We prove the result for a slightly extended range
$\tfrac12\delta r\le \|x-y\| \le \tfrac{2r}{\delta}$.
Recalling that $\delta < \frac{1}{4}$ 
and $\frac{G_r(t)}{t}$ is monotonically decreasing in $t$
(Lemma \ref{lem:boundG}(ii)),

\begin{equation} \label{eq:boundGr}
  \frac{G_r(\|x-y\|)}{\|x-y\|} 
  \ge \frac{G_r(2r/\delta)}{(2r/\delta)}
  >   \frac{G_r(8r)}{(2r/\delta)}
  > \frac{\delta}{3},
\end{equation}

We proceed by considering the exact same three cases as above.
\begin{description}
\item[Case $1'$:] $x$ is padded. 
By the analogous case above, $P_i(x)=P_i(y)=C$ and 
\[
  \|\varphi_i(x)-\varphi_i(y)\| =
  \|f_C(x)-f_C(y)\|.
\]
By \eqref{eq:singleScaleJL} we have
$
    1-\veps < \frac{1}{1+\veps} \le \frac{\|f_C(x)-f_C(y)\|}{\|g_C(x)-g_C(y)\|} \leq 1
$,
where, by construction, the denominator equals $G_r(\|x-y\|)$.
Altogether, we get
\[
  1-\veps \le \frac{\|\varphi_i(x)-\varphi_i(y)\|}{G_r(\|x-y\|)} \le 1.
\]

\item[Case $2'$:] $x$ is not padded and $P_i(x)\neq P_i(y)$.
Combining the analogous case above and Equation \eqref{eq:boundGr}, we 
have
\[
  \|\varphi_i(x) - \varphi_i(y)\|
  \leq 2\delta\|x-y\|
  < 6G_r(\|x-y\|).
\]

\item[Case $3'$:] $x$ is not padded and $x,y$ belong to the same cluster 
$P_i(x)=P_i(y)=C$.
Refining the analysis in the analogous case above
and using Equation \eqref{eq:boundGr} and the triangle inequality, 
we have
\begin{align*}
  \|\varphi_i(x) - \varphi_i(y)\|
  &= \|f_C(x)\tilh_C(x) - f_C(y)\tilh_C(y)\| \\
  &\le \|f_C(x)\tilh_C(x)-f_C(x)\tilh_C(y)\| + \|f_C(x)\tilh_C(y) - f_C(y)\tilh_C(y)\| \\
  &\le \|f_C(x)\| \cdot |\tilh_C(x)-\tilh_C(y)| + \|f_C(x) - f_C(y)\|\cdot |\tilh_C(y)| \\
  &\le r\cdot \tfrac{\delta}{r}\|x-y\|+ G_r(\|x-y\|) \cdot 1\\
  &\le 4G_r(\|x-y\|).
\end{align*}
\end{description}

Again combine these three cases by plugging into Equation 
\eqref{eq:singleScaleCombine}
and recalling that $x$ is padded in at least $1-\veps$ fraction of partitions;
we thus get 
\begin{equation} \label{eq:boundGr2}
  (1-\veps)^3
  \le \frac{\|\varphi(x)-\varphi(y)\|^2}{G_r(\|x-y\|)^2}
  \le (1-\veps) + \veps\cdot 36
  =   1+35\veps.
\end{equation}

For later use, recall that $\veps < \frac{1}{4}$, and let us record that 
\begin{equation} \label{eq:boundGr3}
\|\varphi(x)-\varphi(y)\| 
\le G_r(\|x-y\|)(1+35\epsilon)^{1/2} 
< G_r(\|x-y\|)(1+18\epsilon) 
< \tfrac{11}{2}G_r(\|x-y\|).
\end{equation}

\item[\ref{it:thresholdOpt}.] Boundedness:
By the fact $0\le h_{P_i(x)}(x)\leq 1$ and 
Lemma \ref{lem:singleScale}\ref{it:part1},
\[
  \|\varphi(x)\|^2
  \le \tfrac{1}{m} \sum_{i=1}^m \|\varphi_i(x)\|^2
  \le \tfrac{1}{m} \sum_{i=1}^m \| f_{P_i(x)}(x)\|^2
  \le r^2.
\]
\end{enumerate}
This completes the analysis for net points $x,y\in N$. 

\remove{
\paragraph{Analysis for entire $S$.} 
To complete the proof of Theorem \ref{thm:OptDistortion}, 
we need to prove assertions (a)-(c) for all points in $S$. 
This follows via standard arguments using the above proof for net points 
and appealing to the Kirszbraun theorem to argue that $\varphi$ is 
$1$-Lipschitz on the entire $S$. For technical reasons, it also  
requires some careful estimates on $G_r(t)$. 
This part of the proof is relegated to Appendix \ref{sec:entire-S}.

\paragraph{Analysis for entire $S$.}
We extend the above analysis to all points in $S$ 
in Appendix \ref{app:entireS}. 
Roughly speaking, for every $x,y\in S$, we let $x',y' \in N$ be the 
net points closest to $x$ and $y$, respectively,
and use the bounds obtained above for $x',y'$.
}

\paragraph{Analysis for entire $S$.} 
We extend the previous analysis for net points 
to all points in $S$. 
Fix $x,y\in S$, and let $x',y' \in N$ be the net points closest to $x$ 
and $y$, respectively. 
Recalling that $N$ is an $\veps \delta r$-net, 
we have $\|x-x'\|,\|y-y'\| \le \veps \delta r$. 
To prove the \lip requirement, 
recall that Step 8 extends $\varphi$ from the net $N$ to the entire $S$ 
using the Kirszbraun theorem, i.e.\ without increasing its \lip\ norm, hence
\[
  \|\varphi(x)-\varphi(y)\| \le \|x-y\|.
\]
Using this \lip condition and the triangle inequality, 
we immediately obtain the boundedness requirement: 
\[
  \|\varphi(x)\| 
  \le \|\varphi(x')\| + \lipnorm{\varphi}\|x-x'\|
  \le (1+\veps\delta)r.
\]

To prove the requirement of distortion to the Gaussian 
(which is slightly more involved)
consider the case where
$\delta r\le \|x-y\| \le \frac{r}{\delta}$.
By the triangle inequality,
\begin{equation}  \label{eq:1}
  \Big| \|x-y\| - \|x'-y'\| \Big| 
  \le \|x-x'\| + \|y-y'\|
  \le 2\veps\delta r.
\end{equation}
We conclude that
$ (1-2\veps)\delta r  
  \le \|x'-y'\|
  \le (\frac1\delta + 2\veps\delta)r
$.
Hence
$ \frac{\delta r}{2}
  < \|x'-y'\|
  < \frac{2r}{\delta} 
$
and the net-points $x',y'\in N$ must possess the bound for distortion to the Gaussian. 
It also follows that 
$2\veps \delta r \le 4(1-2\veps)\veps\delta r \le 4\veps \|x'-y'\|$.
Using the \lip condition on $\varphi$, and the above distortion to the 
Gaussian for net points (Equation (\ref{eq:boundGr3})), we similarly 
derive
\begin{equation}\label{eq:2}
  \Big| \|\varphi(x)-\varphi(y)\| - \|\varphi(x')-\varphi(y')\| \Big|
  \le \|x-x'\| + \|y-y'\|
  \le 2\veps\delta r
  \le 4\veps \|x'-y'\|
  <   22\veps G_r(\|x'-y'\|).
\end{equation}

Similar to the derivation of Equation \eqref{eq:1}, we derive $\|x'-y'\| 
\le (1+2\veps)\|x-y\|$, 
and by Lemma \ref{lem:boundG}(iii) we get 
$G_r(\|x'-y'\|) \le (1+6\veps)G_r(\|x-y\|)$.
Together with Equation \eqref{eq:2} and the upper bound for net points
(Equation (\ref{eq:boundGr3})), we obtain
\begin{align*}
  \|\varphi(x)-\varphi(y)\|
  &\le \|\varphi(x')-\varphi(y')\| + 22\veps G_r(\|x'-y'\|)\\
  &\le (1+18\veps+22\veps) G_r(\|x'-y'\|)\\
  &\le (1+40\veps)(1+6\veps) G_r(\|x-y\|).
\end{align*}
The other direction is analogous. 
By \eqref{eq:1} we have $\|x-y\| \le (1+4\veps)\|x'-y'\|$,
and by Lemma \ref{lem:boundG}(iii) we get 
$G_r(\|x-y\|) \le (1+12\veps)G_r(\|x'-y'\|)$.
Together with \eqref{eq:2} and the lower bound for net points (Equation 
(\ref{eq:boundGr2})), we obtain
\begin{align*}
  \|\varphi(x)-\varphi(y)\|
  &\ge \|\varphi(x')-\varphi(y')\| - 22\veps G_r(\|x'-y'\|)\\
  &\ge (1-\veps-22\veps)G_r(\|x'-y'\|) \\
  &\ge (1-23\veps)(1-12\veps) G_r(\|x-y\|).
\end{align*}

This completes the proof of Theorem \ref{thm:OptDistortion}.

\subsection{Snowflake Embedding}
\label{sec:l2betterSnowflake}

We now use Theorem \ref{thm:OptDistortion} (the single scale embedding)
to prove Theorem \ref{thm:main} (embedding for $d^\alpha$). 


Fix a finite set $S\subset\ell_2$ and $0<\veps<1/4$.
Assume without loss of generality that the minimum interpoint distance 
in $S$ is $1$.
Define 
$\tilde{\alpha} = \min \{ \alpha,1-\alpha\}$,
$p
=\frac{\ceil{\log_{1+\veps}(\frac{1}{\veps})}}{\tilde{\alpha}}
=O(\frac{1}{\tilde \alpha \veps} \log \frac{1}{\veps})
$ 
and the set 
$I=\{i\in\mathbb Z:\ (1+\veps)^{-2p} \le (1+\veps)^i \le (1+\veps)^{2p}\diam(S) \}$.
For each $i\in I$, let $\varphi_i:S\to\ell_2^k$ be the embedding that
achieves the bounds of
Theorem \ref{thm:simple1scale} for $S$ and $\veps$ with respect to parameters 
$r=(1+\veps)^i$ 
and
$\delta=(1+\veps)^{-p-1}=\Theta(\veps^{1/ \tilde \alpha})$.
Notice that each $\varphi_i$ has target dimension 
$k=\tilde O(\veps^{-3} \log\tfrac1\delta\cdot (\dim S)^2)
=\tilde O(\veps^{-3} \tilde{\alpha}^{-1} (\dim S)^2)$.

We shall now use the following technique due to Assouad \cite{Assouad83}.
First, each $\varphi_i$ is scaled by 
$r^{\alpha-1} = (1+\veps)^{i(\alpha-1)}$.
They are then grouped in a round robin fashion into $2p$ groups, 
and the embeddings in each group are summed up.
This yields $2p$ embeddings, each into $\ell_2^k$; these
are combined using a direct-sum, resulting in one map $\Phi$ into $\ell_2^{2pk}$.

Formally, let $i\equiv_{p} j$ denote that two integers $i,j$ are equal 
modulo $p$.
Define $\Phi:S\to \ell_2^{2pk}$ using 
the direct sum $\Phi=\bigoplus_{j\in[2p]} \Phi_j$, 
where each $\Phi_j:S\to\ell_2^k$ is given by 
$$
  \Phi_j=\sum_{i\in I:\ i\equiv_{2p} j} \frac{\varphi_i}{(1+\veps)^{i(1-\alpha)}}.
$$
For $M=M(\veps)>0$ that will be defined later, 
our final embedding is $\Phi/\sqrt M:S\to \ell_2^{2pk}$, which has target dimension 
$pk = \tilde O(\veps^{-4} \tilde{\alpha}^{-1} (\dim S)^2)$, as required.
It thus remains to prove the distortion bound. 
We will need the following lemma.

\begin{lemma} \label{lem:buckets}
Let $\Phi:S\to\ell_2^{2pk}$ be as above, let $x,y\in S$,
and define $B_i = \frac{\|\varphi_i(x)-\varphi_i(y)\|}{(1+\veps)^{i(1-\alpha)}}$.
Then for every interval $A\subset I$ of size $2p$
(namely $A=\{a-p,\ldots,a,\ldots,a+p-1\}$),
\begin{align*}
  \|\Phi(x)-\Phi(y)\|^2 
  &\le \sum_{i\in A} 
	\Big(B_i
        \  + \sum_{i'\in I\setminus A:\ i'\equiv_{2p} i} 
	B_{i'}
        \Big)^2 \\
  &= \sum_{i \in A} 
	\Big(B_i^2 +
        \ 2B_i  
	\sum_{i'\in I\setminus A:\ i'\equiv_{2p} i} B_{i'} +
	\Big(
	\sum_{i'\in I\setminus A:\ i'\equiv_{2p} i} B_{i'}
        \Big)^2 \Big) \\  
  &\le \sum_{a-p \le i < a} 
	\Big(B_i^2 + B_{i+p}^2 +
        \ 2(B_i + B_{i+p}) 
	\sum_{i'\in I\setminus A:\ i'\equiv_p i} B_{i'} +
	\Big(
	\sum_{i'\in I\setminus A:\ i'\equiv_p i} B_{i'}
        \Big)^2 \Big) \\  
  \|\Phi(x)-\Phi(y)\|^2 
  &\ge \sum_{i\in A} 
        \Big( \max \Big\{0,
	B_i
        \  - \sum_{i'\in I\setminus A:\ i'\equiv_{2p} i} 
	B_{i'}
        \Big\} \Big)^2 \\
  &\ge \sum_{a-p \le i < a} 
        \Big( B_i^2 + B_{i+p}^2 - 
	\ 2(B_i + B_{i+p})
	\sum_{i'\in I\setminus A:\ i'\equiv_p i} 
	B_{i'} 
	\Big) 
\end{align*}
\end{lemma}
\begin{proof}
By construction,
\[
  \Big\|\Phi(x)-\Phi(y)\Big\|^2 
  = \sum_{j\in[p]} \Big\|\Phi_j(x)-\Phi_j(y)\Big\|^2
  = \sum_{i\in A} \Big\| \sum_{i'\in I:\ i'\equiv_{2p} i}
            \frac{\varphi_{i'}(x)-\varphi_{i'}(y)}{(1+\veps)^{i(1-\alpha)}}
  \Big\|^2.
\]
Fix $i\in A$ and let us bound the term corresponding to $i$.
The first required inequality now follows by separating
(among all $i'\in I$ with $i'\equiv_{2p} i$)
the term for $i'=i$ from the rest,
and applying the triangle inequality for 
vectors $v_1,\ldots,v_s\in \ell_2^k$, namely, 
$
  \|\sum_l v_l\| \leq \sum_l \|v_l\|
$.
We then bound the sum of the terms for indices $i$ and $i+p$.
The second inequality follows similarly by separating the term for $i'=i$ 
from the rest, and applying the following triangle inequality for 
vectors $u,v_1,\ldots,v_s\in \ell_2^k$, namely, 
$
  \|u+\sum_l v_l\| \ge \max\{0, \|u\| - \sum_l \|v_l\| \}
$.  
We then bound the sum of the terms for indices $i$ and $i+p$. (Note
that $(\max\{0,b-c\})^2 \ge b^2 - 2bc$.)
\end{proof}

The proof of Theorem \ref{thm:main} proceeds by demonstrating that, for an 
appropriate choice of $A$ (meaning $a$, having fixed $p$), the leading terms in 
the above summations ($B_i^2$ and $B_{i+p}^2$ for $a \le i < a+p$) dominate 
the sum of all other terms of the summations. Fix $x,y\in S$, and let $i^*\in 
I$ be such that $(1+\veps)^{i^*} \le \|x-y\| \le (1+\veps)^{i^*+1}$. We 
wish to apply Lemma \ref{lem:buckets}. To this end, let $a=i^*$ and so
$A=\{i^*-p,\ldots,i^*+p-1\}$. Consider $i\in A$; we have the following 
lemma:

\begin{lemma}\label{lem:sumbounds}
The following hold for all $i^*-p \le i < i^*$:
\begin{enumerate} \compactify
\renewcommand{\theenumi}{(\alph{enumi})}
\item\label{it:lowterms}
$4 \veps \cdot B_i 
\ge \sum_{i'\in I\setminus A:\ i'\equiv_p i , i'<i} B_{i'}$
\item\label{it:highterms}
$4 \veps (1+\veps) \cdot B_{i+p} 
\ge \sum_{i'\in I\setminus A:\ i'\equiv_p i , i'>i} B_{i'}$
\item\label{it:allterms}
$4 \veps (1+\veps) \cdot (B_i + B_{i+p}) 
\ge \sum_{i'\in I\setminus A:\ i'\equiv_p i} B_{i'}$
\end{enumerate}
\end{lemma}

\begin{proof}
First recall that we fixed $\delta = (1+\veps)^{-p-1}$, 
and observe that for $i^*-p \le j < i^* + p$
\[
  \delta 
  < (1+\veps)^{-p}
  < (1+\veps)^{i^*-j}
  \le \frac{\|x-y\|}{(1+\veps)^j} 
  \le (1+\veps)^{i^*+1-j}
  \le (1+\veps)^{p+1}
  = \tfrac1\delta,
\]
hence we can apply Theorem \ref{thm:OptDistortion}\ref{it:bilipOpt2} to obtain
\begin{equation}  \label{eq:5}
  \tfrac{1}{1+\veps}
  \le \tfrac{\|\varphi_j(x)-\varphi_j(y)\|}{G_{(1+\veps)^j}(\|x-y\|)} 
  \le 1.
\end{equation}

\begin{enumerate}
\item[\ref{it:lowterms}] 
We first give a lower bound for $B_i$.
Equation \eqref{eq:5} implies that
$\| \varphi_i(x)-\varphi_i(y) \| \ge \frac{G_{(1+\veps)^i}(\|x-y\|)}{(1+\veps)}$.
Lemma \ref{lem:boundG}(i) states that
$G(t)$ is monotone increasing in $t$, 
and noting that
$G(1) > \frac{1+\veps}{2}$ and $i<i^*$, we have
\begin{align*}
B_i
&= \frac{\| \varphi_i(x)-\varphi_i(y) \|}{(1+\veps)^{i(1-\alpha)}} \\
&\ge \frac{G_{(1+\veps)^i}(\|x-y\|)}{(1+\veps)^{i(1-\alpha) + 1}} \\
&= (1+\veps)^{i\alpha - 1} G(\|x-y\|/(1+\veps)^i)    \\
&> (1+\veps)^{i\alpha - 1} G(1)    \\
&\ge \frac{(1+\veps)^{i\alpha}}{2}.
\end{align*}

We now give an upper bound on
$\sum_{i'\in I\setminus A:\ i'\equiv_p i , i'<i} B_{i'}$
when $i<i^*$. 
By Theorem \ref{thm:OptDistortion}\ref{it:lipOpt} and \ref{it:thresholdOpt},
for all $i'\in I$,
\[
  \|\varphi_{i'}(x)-\varphi_{i'}(y)\| \le \min \{\|x-y\| ,(1+\veps)^{i'} \}.
\]
and thus
\begin{align*}
\sum_{i'\in I\setminus A:\ i'\equiv_p i , i'<i} B_{i'}
&= \sum_{i'\in I\setminus A:\ i'\equiv_p i , i'<i} \frac{\| \varphi_i(x)-\varphi(y)
\|}{(1+\veps)^{i'(1-\alpha)}}	\\
&\le \sum_{i'\in I\setminus A:\ i'\equiv_p i , i'<i} 
\frac{(1+\veps)^{i'}}{(1+\veps)^{i'(1-\alpha)}}	\\
&= \sum_{i'\in I\setminus A:\ i'\equiv_p i , i'<i} (1+\veps)^{i'\alpha}.
\end{align*}
Noting that $\tilde \alpha \le \alpha$, and recalling that a
geometric series with ratio less than 1/2 sums to less than twice the
largest term, we have that
\[
 \sum_{i'\in I\setminus A:\ i'\equiv_p i , i'<i} B_{i'}
 \le 2(1+\veps)^{(i-p)\alpha}	
 \le 2\veps (1+\veps)^{i\alpha}
 < 4 \veps B_i.
\]

\item[\ref{it:highterms}]
We first give a lower bound for $B_{i+p}$ when $i^*-p \le i < i^*$. 
Lemma \ref{lem:boundG}(i) states that $G_r(t)$ is monotone 
increasing in $r$.
\begin{align*}
B_{i+p}
& = \frac{\| \varphi_{i+p}(x)-\varphi_{i+p}(y) \|}{(1+\veps)^{(i+p)(1-\alpha)}} \\
& \ge \frac{G_{(1+\veps)^{i+p}}(\|x-y\|)}{(1+\veps)^{(i+p)(1-\alpha) +1}}  \\
& \ge \frac{G_{(1+\veps)^{i^*}}(\|x-y\|)}{(1+\veps)^{(i+p)(1-\alpha) +1 }}    \\
& \ge \frac{(1+\veps)^{i^*} G(1)}{(1+\veps)^{(i+p)(1-\alpha) +1}}     \\
& > \frac{(1+\veps)^{i^*-(i+p)(1-\alpha)}}{2}
\end{align*}

We now give an upper bound on
$\sum_{i'\in I\setminus A:\ i'\equiv_p i , i'>i} B_{i'}$.
By Theorem \ref{thm:OptDistortion}\ref{it:lipOpt} and \ref{it:thresholdOpt},
for all $i'\in I$,  
\[
  \|\varphi_{i'}(x)-\varphi_{i'}(y)\| \le \min \{ \|x-y\| ,(1+\veps)^{i'} \}.
\]
and thus
\begin{align*}
\sum_{i'\in I\setminus A:\ i'\equiv_p i , i'>i} B_{i'}
&= \sum_{i'\in I\setminus A:\ i'\equiv_p i , i'>i}	 
\frac{\| \varphi_i(x)-\varphi(y) \|}{(1+\veps)^{i'(1-\alpha)}}  \\
&\le \sum_{i'\in I\setminus A:\ i'\equiv_p i , i'>i} 
\frac{(1+\veps)^{i^*+1}}{(1+\veps)^{i'(1-\alpha)}} \\
&\le \sum_{i'\in I\setminus A:\ i'\equiv_p i , i'>i} (1+\veps)^{i^*-i(1-\alpha)+1}.
\end{align*}
Noting that $\tilde \alpha \le 1 - \alpha$, and recalling that a
geometric series with ratio less than 1/2 sums to less than twice the
largest term, we have that
\[
 \sum_{i'\in I\setminus A:\ i'\equiv_p i , i'>i} B_{i'}
 \le 2(1+\veps)^{i^* -(i+2p)(1-\alpha) + 1}
 \le 2\veps (1+\veps)^{i^* -(i+p)(1-\alpha) +1}
 < 4 \veps (1+\veps) B_{i+p}.
\]

\item[\ref{it:allterms}] This follows trivially from parts \ref{it:lowterms} and \ref{it:highterms}.
\end{enumerate}
\end{proof}

Now, plugging \eqref{eq:5} and Lemma \ref{lem:sumbounds} into Lemma 
\ref{lem:buckets}, we obtain 
\begin{align*}
  \|\Phi(x)-\Phi(y)\|^2 
  & \ge \sum_{i\in A:\ i < i^*}
	\Big( B_i^2 + B_{i+p}^2 - 2(B_i+B_{i+p})
	\sum_{i'\in I \setminus A:\ i'\equiv_p i} B_{i'} 
	\Big) \\
  &\ge \sum_{i\in A:\ i < i^*}
	\Big( B_i^2 + B_{i+p}^2 - 8\veps(1+\veps)(B_i+B_{i+p})^2 \Big)	\\
  &\ge  (1- 16\veps(1+\veps))
	\sum_{i\in A:\ i < i^*}
	\Big( B_i^2 + B_{i+p}^2 \Big)	\\
  & = 	(1- 16\veps(1+\veps))
	\sum_{i\in A} B_i^2	\\
  &\ge 	\frac{1- 16\veps(1+\veps)}{(1+\veps)^2}
	\sum_{i\in A}
      	\Big(
	\frac{G_{(1+\veps)^i}(\|x-y\|)}{(1+\veps)^{i(1-\alpha)}} 
	\Big)^2 \\
  &\ge 	\frac{1- 16\veps(1+\veps)}{(1+\veps)^2}
	\sum_{i^*-p \le i < i^*+p}
      	\Big(
	(1+\veps)^{i\alpha} \cdot G((1+\veps)^{i^*-i}) 
	\Big)^2 \\
  &\ge  (1- 16\veps(1+\veps))(1+\veps)^{2i^*\alpha-2}
	\sum_{b:\ -p \le b < p}
      \Big( (1+\veps)^{b\alpha}\cdot G((1+\veps)^{-b}) \Big)^2 \\
  &>  (1- 16\veps(1+\veps))(1+\veps)^{-4} \|x-y\|^{2\alpha}
	\sum_{b:\ -p \le b < p}
      \Big( (1+\veps)^{b\alpha}\cdot G((1+\veps)^{-b}) \Big)^2.
\end{align*}
and similarly, using also Lemma \ref{lem:boundG} and recalling that 
$\veps < \frac{1}{4}$ 
\begin{align*}
  \|\Phi(x)-\Phi(y)\|^2 
  &\le \sum_{i\in A:\ i < i^*}
	\Big( B_i^2 + B_{i+p}^2 + 2(B_i+B_{i+p}) 
	\sum_{i'\in I\setminus A:\ i'\equiv_p i} B_{i'} +
	\Big(
	\sum_{i'\in I\setminus A:\ i'\equiv_p i} B_{i'} 
	\Big)^2	
	\Big) \\
  &< (1+ 24\veps(1+\veps)^2)
	\sum_{i\in A:\ i < i^*}
	\Big( B_i^2 + B_{i+p}^2 \Big)	\\
  &= 	(1+ 24\veps(1+\veps)^2)
	\sum_{i\in A} B_i^2	\\
  &\le 	(1+ 24 \veps(1+\veps)^2)
	\sum_{i\in A}
      	\Big(
	\frac{G_{(1+\veps)^i}(\|x-y\|)}{(1+\veps)^{i(1-\alpha)}} 
	\Big)^2 \\
  &\le 	(1+ 24 \veps(1+\veps)^2)
	\sum_{i^*-p \le i < i^*+p}
      	\Big(
	(1+\veps)^{i \alpha} \cdot G((1+\veps)^{i^*-i+1}) 
	\Big)^2 \\
  &\le  (1+ 24 \veps(1+\veps)^2) (1+3\veps)^2 (1+\veps)^{2i^*\alpha}
	\sum_{b:\ -p \le b < p}
	\Big( 
	(1+\veps)^{b \alpha}\cdot G((1+\veps)^{-b}) 
	\Big)^2 \\
  &\le  (1+ 24 \veps(1+\veps)^2)(1+3\veps)^2 \|x-y\|^{2\alpha}
	\sum_{b:\ -p \le b < p}
      \Big( (1+\veps)^{b\alpha}\cdot G((1+\veps)^{-b}) \Big)^2.  
\end{align*}

Setting 
$M=\sum_{b:\ -p \le b < p}\Big( (1+\veps)^{b\alpha}\cdot 
G((1+\veps)^{-b}) \Big)^2$,
which clearly depends only on $\veps,\alpha$ 
(and is in particular independent of $x,y$),
we combine the last two estimates to obtain
\[
  \frac{1- 16\veps(1+\veps)}{(1+\veps)^4}
  < \frac{\|\Phi(x)-\Phi(y)\|^2}{M\ \|x-y\|^{2\alpha}}
  \le (1+ 24\veps(1+\veps)^2)(1+3\veps)^2
\]
We conclude that the final embedding $\Phi/\sqrt M$ achieves distortion $1+O(\veps)$
to $\|x-y\|^{\alpha}$ for all $0< \alpha < 1$, and this concludes the proof of
Theorem \ref{thm:main}.

\remove{
\paragraph{Arbitrary $\mathbf{0<\alpha<1}$.}
Turning to proving the theorem for arbitrary values of $0<\alpha<1$, 
we repeat the previous construction and
proof with 
$p
=\frac{\ceil{\log_{1+\veps}(\frac{1}{\veps})}}{\tilde{\alpha}}
=O(\frac{1}{\tilde \alpha \veps} \log \frac{1}{\veps})
$ 
and
$\delta=(1+\veps)^{-p-1}=\Theta(\veps^{1/ \tilde \alpha})$, where
$\tilde{\alpha} = \min \{ \alpha,1-\alpha\}$.
As before, $\varphi_i:S\to\ell_2^k$ is the embedding that achieves the
bounds of Theorem \ref{thm:simple1scale} for $S$ and $\delta$
(so $k=\tilde O(\veps^{-3} \log\tfrac1\delta\cdot (\dim S)^2)
=\tilde O(\veps^{-3} \tilde{\alpha}^{-1} (\dim S)^2)$), 
and $\Phi:S\to \ell_2^{pk}$ is defined by
the direct sum $\Phi=\bigoplus_{j\in[p]} \Phi_j$, 
where each $\Phi_j:S\to\ell_2^k$ is given by 
$$
  \Phi_j=\sum_{i\in I:\ i\equiv_p j} 
\frac{\varphi_i}{(1+\veps)^{i(1-\alpha)}}.
$$
The final embedding is $\Phi/\sqrt M:S\to \ell_2^{pk}$ (for the same $M$ 
as above), which has target dimension $pk \le \tilde 
O(\veps^{-4} \tilde{\alpha}^{-2}\log^2 \lambda)$, as required.

We need to make only small changes to the preceding proof of distortion: 
In the statement and proof of Lemma \ref{lem:buckets} and elsewhere, the 
dividing term $(1+\veps)^{i/2}$ is replaced by $(1+\veps)^{i(1-\alpha)}$.
Note that the increase in value of $p$ (and the introduction of the term 
$\tilde{\alpha}$), is necessary for Lemma \ref{lem:sumbounds} to hold in this 
setting. (In particular, the bounds on the geometric series in the proof of 
Lemma \ref{lem:buckets} \ref{it:lowterms} require, for the above choice of $p$,
that $\tilde \alpha \le \alpha$, and the bounds 
on the geometric series in the proof of \ref{it:highterms} require 
$\tilde \alpha \le 1-\alpha$.) No other changes to the proof are necessary, and this
completes the proof of Theorem \ref{thm:main}.
} 

\section{Extension to {\boldmath $\ell_1$} Space}
\label{sec:lp}

We explain how our results and techniques can be extended 
to $\ell_1$.
A number of key tools used in our previous embeddings are specific 
to $\ell_2$, for example the JL-Lemma, the Gaussian transform,
and the Kirszbraun theorem,
and we must therefore find suitable replacements for these tools.
Note however that there is no Lipschitz extension theorem for $\ell_1$. 

The primary result of this section is
a variant of our snowflake embedding, Theorem \ref{thm:main}.\footnote{Subsequent
to the publication of this result in Proceedings of SODA 2011,
Bartal and Gottlieb \cite{BG14} presented a new single-scale embedding for all
$\ell_p$, $1 \le p < 2$, and derived a snowflake embedding for
$\ell_p$ with only polynomial dependence on the doubling dimension.}
We note that the snowflake operator is necessary in this theorem,
as for $\alpha=1$ 
Lee, Mendel and Naor \cite[Theorem 1.3]{LMN05} have shown 
that the target dimension 
cannot be bounded as a function of $\lambda(S)$, independently of $|S|$.

\begin{theorem} \label{thm:mainlp}
Let $0<\veps<1/4$, $0<\alpha<1$ and $p\in\{1,\infty\}$.
Every finite subset $S\subset \ell_1$ with $\lambda=\lambda(S)$ 
admits an embedding $\Phi:S\to \ell_1^k$ satisfying
$$
   1
   \le \frac{\|\Phi(x)-\Phi(y)\|_1}{\|x-y\|_1^{\alpha}}
   \le 1+\veps,
   \qquad \forall x,y\in S;
$$
with
$k  = \mathrm{exp}\{\lambda^{O(\tilde{\alpha}\log(1/\veps)+\log\log\lambda)}\}$.
\end{theorem}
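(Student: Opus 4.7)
The plan is to re-run the seven-step construction of Section \ref{sec:l2BetterSingleScale} in the $\ell_p$ setting, producing an analog of Theorem \ref{thm:simple1scale} for $p\in\{1,\infty\}$, and then to feed the resulting single-scale embedding into the Assouad-style bucketing of Section \ref{sec:l2betterSnowflake} essentially verbatim. Steps 1 (net), 2 (padded decomposition), 5 (gluing within a partition via the boundary proxy $h_C$), and 6 (direct sum across partitions) are purely metric/combinatorial and carry over unchanged; only Steps 3 (bounded transform), 4 (per-cluster dimension reduction) and 7 (Lipschitz extension) require substitutes.

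For Step 3 the replacement is coordinate-wise truncation $t\mapsto\operatorname{sgn}(t)\cdot\min\{|t|,r\}$ applied after translating each cluster to a designated origin. This map is $1$-Lipschitz as a map $\ell_\infty\to\ell_\infty$ and as a map $\ell_1\to\ell_1$, and it bounds within-cluster distances by the desired $O(r)$; in the $\ell_1$ case one first projects each cluster onto the at most $|C|-1$ affine directions it spans, so that the truncated image lies in a bounded-diameter $\ell_1^{|C|}$. For Step 4, Property \ref{prop:doubling} gives $|C|\le\lambda^{O(\log(1/\veps\delta))}$; in $\ell_\infty$ the Fr\'echet-type embedding puts any $n$-point subset isometrically into $\ell_\infty^n$, so no real ``reduction'' is needed and we keep $|C|$ coordinates, producing the polynomial-in-$\lambda$ bound. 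In $\ell_1$ no $(1+\veps)$-distortion JL-type theorem holds, and I would use a dense enumeration of random $\{-1,+1\}$-functionals (a Schechtman-style $\ell_1$ embedding) of dimension $2^{O(|C|\log(1/\veps))}$, which accounts for the outer exponential in the $\ell_1$ bound. For Step 7, in $\ell_\infty$ the McShane--Whitney extension theorem extends any real-valued $1$-Lipschitz function with the same constant, and hence coordinate-wise extends any $1$-Lipschitz map $N\to\ell_\infty^k$ to all of $S$. In $\ell_1$, where no Lipschitz extension theorem is available, I would instead take $N=S$ already in Step 1, omitting Step 7; the only cost is a mild inflation of the cluster sizes that is absorbed into the $\log\log\lambda$ factor.

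With these substitutions the single-scale analysis of Section \ref{sec:l2BetterSingleScaleAnalysis} goes through line by line: the three cases (padded; unpadded in different clusters; unpadded in the same cluster) use only the product rule for Lipschitz norms and the triangle inequality, both of which hold in any normed space, and the resulting embedding satisfies an analog of Equation \eqref{eq:tilG} with a truncation-based correction $\tilde G^{(p)}_r$ playing the role of $G_r$. Feeding this into the bucketing construction of Section \ref{sec:l2betterSnowflake} (with the same number $\Theta((1-\alpha)^{-1}\veps^{-1}\log(1/\veps))$ of buckets and the same normalization $M$, and observing that Lemma \ref{lem:buckets} uses only the triangle inequality so it transfers to $\ell_p$) yields the claimed $(1+\veps)$-distortion snowflake embedding, with final target dimension equal to the bucket count times the per-scale dimension.

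The main obstacle I expect is Step 3 in $\ell_1$: a naive coordinate-wise truncation on an arbitrary $\ell_1$ cluster preserves distances at short scales but may not behave quite as smoothly as $G_r$ on the relevant window $[\delta r,\,r/\delta]$, so I would have to verify that the chosen truncation is simultaneously a contraction and a near-isometry on that window in order for the Assouad-bucketing contributions to telescope as in the $\ell_2$ proof. A secondary obstacle is the absence of a Lipschitz extension theorem in $\ell_1$, which forces the $N=S$ workaround and requires re-checking that the bounds of Lemma \ref{lem:singleScale} (originally verified only at net points) hold throughout $S$ without a clean extension step.
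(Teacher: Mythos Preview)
Your high-level plan matches the paper's: build $\ell_p$ analogs of the single-scale embedding and then run the Assouad bucketing. But two of your proposed substitutions have genuine gaps, both in the $\ell_1$ case.

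First, the $N=S$ workaround does \emph{not} merely inflate cluster sizes by a $\log\log\lambda$ factor. The bound $|C|\le\lambda^{O(\log(1/\veps\delta))}$ in Step~4 comes from Property~\ref{prop:doubling} and uses that net points are $\veps\delta r$-separated; if $N=S$ the minimum interpoint distance inside a cluster can be arbitrarily small relative to $r$, so $|C|$ is unbounded in terms of $\lambda,\veps,\delta$ and your per-cluster dimension (whether $|C|$ or $2^{O(|C|)}$) blows up. The paper's fix is more delicate: it keeps the net $N$, runs the padded decomposition on all of $S$, and then requires the per-cluster map $\Psi$ to be $1$-Lipschitz on the \emph{entire} cluster $g_C(C)$ while being a near-isometry only on the \emph{net points} $g_C(C\cap N)$. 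This asymmetric requirement is what makes the dimension $2^{|C\cap N|}$ rather than $2^{|C|}$, and it is not met by a generic Schechtman-type random-sign embedding; the paper uses a specific cut-metric construction (group together all cuts that induce the same partition of $C\cap N$) to achieve it.

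Second, coordinate-wise truncation does not yield a correction function $\tilde G^{(1)}_r$ depending only on $\|x-y\|_1$: the truncated distance depends on the individual coordinates of $x$ and $y$, not just on their $\ell_1$ distance, so there is nothing to plug into the role of $G_r$ in Equation~\eqref{eq:tilG} or Lemma~\ref{lem:boundG}. The paper instead uses the Laplace transform $L_r(t)=r(1-e^{-t/r})$, which (like the Gaussian for $\ell_2$) is known to send $\ell_1$ metrics to $\ell_1$ metrics \cite[Corollary 9.1.3]{DL97}, is bounded by $r$, is $1$-Lipschitz, and is a function of distance alone. For $\ell_\infty$ the analogous fix is a Fr\'echet variant giving exactly $T_r(\|x-y\|_\infty)=\min\{\|x-y\|_\infty,r\}$, again a function of distance; your coordinate truncation is $1$-Lipschitz but does not deliver this. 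Your $\ell_\infty$ outline is otherwise close to the paper's (Fr\'echet for Step~4 is correct; McShane--Whitney would also work though the paper follows the $\ell_1$ template instead).
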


Recall that our (refined) single scale embedding for $\ell_2$ (Theorem 
\ref{thm:OptDistortion}), coupled with an application of Assouad's 
technique, were sufficient to prove Theorem \ref{thm:main}. 
Similarly, a single scale embeddings for $\ell_1$, coupled with a 
standard application of Assouad's technique, is sufficient to prove 
Theorem \ref{thm:mainlp}. We present a single scale embedding for $\ell_1$ 
below, and Theorem \ref{thm:mainlp} then follows easily.


\subsection{Single Scale Embedding for $\ell_1$}

We can extend Theorem \ref{thm:OptDistortion} to $\ell_1$ spaces as follows.
For $r>0$ define $L_r:\R\to\R$, called the Laplace distance transform, by 
$L_r(t)=r(1-e^{-t/r})$. Observe that $L_r(t)=r\cdot G(\sqrt{t/r})^2$.

\begin{theorem} \label{thm:L1SingleScale}
For every scale $r>0$ and every $0<\delta,\veps<1/4$,
every finite set $S\subset \ell_1$ admits an embedding 
$\varphi:S\to \ell_1^k$ for 
$k=\mathrm{exp}\{\lambda^{O(\log(1/\veps\delta)+\log\log\lambda)}\}$,
satisfying:
\begin{enumerate} \compactify
\renewcommand{\theenumi}{(\alph{enumi})}
\item\label{it:lipOptL1}
\lip condition:
$\|\varphi(x)-\varphi(y)\|_1 \le \|x-y\|_1$
for all $x,y\in S$.
\item\label{it:bilipOpt2L1}
$1+\veps$ distortion to the Laplace transform (at scales near $r$):
  For all $x,y\in S$ with $\delta r \le \|x-y\|_1\le \frac{r}{\delta}$,
$$
  \frac{1}{1+\veps}
  \le \frac{\|\varphi(x)-\varphi(y)\|_1}{L_r(\|x-y\|_1)} \le 1.
$$
\item\label{it:thresholdOptL1}
Boundedness: $\|\varphi(x)\|_1 \leq r$ for all $x\in S$.
\end{enumerate}
\end{theorem}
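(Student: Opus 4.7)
The plan is to mimic the seven-step construction that proves Theorem \ref{thm:OptDistortion}, substituting an $\ell_1$-specific tool at each step where an $\ell_2$-specific one was used. Steps~1 and~2 are purely metric: extract an $(\veps\delta r)$-net $N\subseteq S$ and apply Theorem~\ref{thm:decomp} to $N$ with padding $\tfrac{3r}{\delta}$, obtaining partitions $[P_1,\ldots,P_m]$ with $m=O(\veps^{-1}\log\lambda\log\log\lambda)$ and clusters of diameter $\Delta=O((r/\delta)\log\lambda)$. Property~\ref{prop:doubling} then bounds $|C|\le \lambda^{O(\log\log\lambda+\log(1/\veps\delta))}$ for every cluster $C$, exactly as before.

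For the in-cluster embedding I replace the Gaussian transform by the Laplace transform: since $t\mapsto e^{-|t|/r}$ is negative-definite with respect to the line, there is a map $g_C:C\to L_1$ with $\|g_C(x)-g_C(y)\|_1=L_r(\|x-y\|_1)$, and $L_r(t)\le\min\{t,r\}$ makes $g_C$ both $1$-Lipschitz and bounded by $r$. To pass to finite dimension---with no JL-Lemma available for $\ell_1$---I use the cut-cone decomposition: the $|C|$-point $\ell_1$-metric $g_C(C)$ is a nonnegative combination of cut pseudometrics and therefore embeds isometrically into $\ell_1^{k'}$ with $k'\le 2^{|C|-1}=\exp\{\lambda^{O(\log\log\lambda+\log(1/\veps\delta))}\}$, which exactly matches the target dimension. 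Composing these two substeps yields the per-cluster map $f_C:C\to\ell_1^{k'}$.

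The gluing steps transfer with essentially one modification. The scalar smoothing $\tilh_C(x)=\min\{1,(\delta/r)h_C(x)\}$ and the Lipschitz product rule from Section~\ref{sec:prelim} are norm-agnostic, so the cluster-gluing formula $\varphi_i(x)=f_{P_i(x)}(x)\cdot\tilh_{P_i(x)}(x)$ is identical; but the direct sum over partitions must be scaled by $1/m$ rather than $1/\sqrt m$, since $\|\bigoplus_i v_i\|_1=\sum_i\|v_i\|_1$. The three-case analysis from Section~\ref{sec:l2BetterSingleScaleAnalysis} (``padded,'' ``different clusters,'' ``same cluster but unpadded'') then goes through essentially verbatim: when $x$ is padded in $P_i$ the contribution is $(1\pm\veps)\,L_r(\|x-y\|_1)$, and otherwise it is at most $O(L_r(\|x-y\|_1))$ by boundedness of $f_C$ and the $h_C$-smoothing; averaging over $i$ with padded fraction $1-\veps$ yields the claimed $(1\pm O(\veps))$-distortion to $L_r$ together with the $1$-Lipschitz bound, and constant-factor slack is absorbed by rescaling $\veps$.

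The main obstacle is Step~7: there is no Kirszbraun theorem for $\ell_1^k$-valued Lipschitz maps, so I cannot extend $\varphi$ from $N$ to $S$ while preserving its $\ell_1$-Lipschitz constant. The workaround I envision is to define $\varphi$ directly on all of $S$ rather than first on $N$: for each $x\in S$ and each $i$, take $P_i(x)$ to be the cluster whose nearest net point is closest to $x$, and let $h_{P_i(x)}(x)=\min_{y\in N\setminus P_i(x)}\|x-y\|_1$, combined if necessary with a partition-of-unity smoothing over nearby net points so that $x\mapsto f_{P_i(x)}(x)$ is globally $1$-Lipschitz. Because $N$ is an $(\veps\delta r)$-net, for any $x,y\in S$ with nearest net-representatives $x',y'$ we have $\bigl|\|x-y\|_1-\|x'-y'\|_1\bigr|\le 2\veps\delta r$, an $O(\veps)$ relative perturbation in the regime $\delta r\le\|x-y\|_1\le r/\delta$. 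A simple ratio bound $L_r((1+\eta)t)\le (1+O(\eta))L_r(t)$, easier to verify than Lemma~\ref{lem:boundG} for $G_r$ by direct computation on $L_r(t)/t=(1-e^{-t/r})/(t/r)$, then absorbs this perturbation into the distortion budget, mirroring the final argument of Section~\ref{sec:l2BetterSingleScaleAnalysis}.
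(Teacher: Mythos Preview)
Your overall plan matches the paper: Laplace transform in place of Gaussian, cut-cone decomposition in place of JL, $1/m$ scaling of the direct sum, and defining $\varphi$ on all of $S$ directly rather than extending via Kirszbraun.

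The gap is in how you carry out that last point. Your $f_C$ is defined only on $C\subset N$, and the phrase ``combined if necessary with a partition-of-unity smoothing over nearby net points so that $x\mapsto f_{P_i(x)}(x)$ is globally $1$-Lipschitz'' is not a construction. Making an $\ell_1^{k'}$-valued interpolant Lipschitz while retaining the distortion-to-Laplace bound on net points requires a concrete argument you do not give; controlling the Lipschitz constant of such a smoothing involves the number of overlapping bump functions and their individual Lipschitz norms, none of which you have bounded.

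The paper resolves this more structurally. It applies the padded decomposition to all of $S$ (not just to $N$), so each cluster $C$ is a subset of $S$ and the Laplace map $g_C$ is automatically defined on all of $C$. The burden then shifts to Step~4, which must now produce $\Psi:g_C(C)\to\ell_1^{k'}$ that is $1$-Lipschitz on \emph{all} of $g_C(C)$ yet has $k'$ depending only on $|C\cap N|$ rather than $|C|$. The trick is to write $\|g_C(x)-g_C(y)\|_1=\sum_{A\subset C}\gamma_A|1_A(x)-1_A(y)|$ and then merge all cuts having the same trace on the net: for each $B\subset C\cap N$, the $B$-coordinate of $\Psi$ is $x\mapsto\sum_{A:\,A\cap(C\cap N)=B}\gamma_A 1_A(x)$. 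Summing coordinates can only contract $\ell_1$ distances, so $\Psi$ is $1$-Lipschitz on all of $g_C(C)$; and for $x,y\in C\cap N$ any two merged cuts agree on both $1_A(x)$ and $1_A(y)$, so $\Psi$ is isometric on net points. This yields $f_C=\Psi\circ g_C$ defined and $1$-Lipschitz on all of $C$ with $k'=2^{|C\cap N|}$, and Step~5 uses $h_C(x)=\min_{y\in S\setminus C}\|x-y\|_1$. Property~(b) is then verified first for net points and passed to all of $S$ via the triangle inequality and the Laplace analogue of Lemma~\ref{lem:boundG}, exactly as your final paragraph anticipates.
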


\begin{proof}[Proof Sketch]
We would like to utilize the framework designed for $\ell_2$ 
in Section \ref{sec:l2BetterSingleScale}.
However, a few problems arise. 
Let us point them out explain how to solve them.

\begin{itemize} \compactify
\item Step 7: 
This step is not possible for $\ell_1$ norm,
since there is no $\ell_1$-analogue of the Kirszbraun theorem. 
Instead, we modify the entire construction (specifically, steps 2-6) 
so that they work with the entire data set $S$, not only with the net $N$.
The effect of this will be seen shortly.
(In $\ell_2$, the same approach of discarding step 7 can be achieved 
by applying the Kirszbraun Theorem separately in every cluster in step 4,
but this approach does not seem to have any advantages.)
\item Step 2: 
We apply a padded decomposition to the entire set $S$ 
(and not only to the net $N$) with essentially the same parameters and bounds.
Thus, from now on each cluster $C\in P_i$ is a subset of $S$ 
(rather than of $N$).
\item Step 3: 
Instead of the Gaussian transform, we apply the Laplace transform
$L_r$, i.e. $g_C$ now satisfies 
$\|g_C(x)-g_C(y)\|_1  = L_r(\|x-y\|)$ for all $x,y\in C$.
Such an embedding $g_C:C\to l_1$ is known to exist, 
see \cite[Corollary 9.1.3]{DL97}.
The effect is clearly quite similar to that of the Gaussian transform.
The fact that $C$ is not a subset of $N$ is not an issue.
\item Step 4: 
We need to find a weak analogue to the JL lemma, but there is 
an additional complication of having to deal with points not in 
the net $N$.
Specificially, we need a map $\Psi:g_C(C)\to \ell_1^{k'}$ 
which satisfies: 
(i) $\Psi$ is $1$-\lip on the entire cluster $g_C(C)$; and 
(ii) $\Psi$ achieves $1+\veps$ distortion
on the cluster net points $g_C(C\cap N)$.
Observe that the former requirement is non-standard and does not follow
from ``standard'' dimension reduction theorems for finite subsets of 
$\ell_1$.
We will utilize the simple $\ell_1$ dimension-reduction mapping embedding 
described below in Theorem \ref{thm:l1-isometry} (recall $g_C(C) \subset \ell_1$), 
that achieves dimension $k'=2^{|C\cap N|}$ where 
$|C\cap N| \le \lambda^{O(\Delta/\veps\delta r)}$.\footnote{We suspect 
that the dimension can be further reduced, since
the construction of Theorem \ref{thm:l1-isometry} is an isometry on 
$g_C(C\cap N)$, and does not exploit the $1+\veps$ distortion allowed by 
requirement (ii). However, an improved map $\Psi$ cannot be linear, since in the worst case 
such a linear map requires dimension 
$k = 2^{\Omega(|C\cap N|)}$ \cite[Corollary 12.A]{FJS91}.}

\item Step 5: 
There is only a minor change;
since we do not restrict attention to net points,
we now define $h_C(x)=\min_{y\in S\setminus C} \|x-y\|_1$.
\item Step 6: 
There is only a minor change to the scaling factor,
namely $\varphi = m^{-1} \bigoplus^m_{i=1} \varphi_i$.
\end{itemize}

The rest of the proof is quite similar to the one presented for $\ell_2$,
and the final dimension obtained is 
$mk'
  = O(\veps^{-1}\log\lambda\log\log\lambda)\cdot \mathrm{exp}(\lambda^{O(\log(\veps^{-1}\delta^{-2}\log\lambda))})
  = \mathrm{exp}\{\lambda^{O(\log(1/\veps\delta)+\log\log\lambda)}\}$.
\end{proof}

It remains only to present the following construction,
which was observed jointly with Gideon Schechtman.

\begin{theorem}\label{thm:l1-isometry}
Given a point set $\tilde{C} \subset \ell_1$ and a subset $D \subset \tilde{C}$, 
there exists a map
$\Psi:\tilde{C} \to \ell_1^{k}$ 
with $k = 2^{|D|}$
which satisfies: 
(i) $\Psi$ is $1$-\lip on all of $\tilde{C}$; and 
(ii) $\Psi$ is an isometry on the subset $D$.
\end{theorem}

\begin{proof}
Construct $\Psi$ as follows. Since the metric $\tilde{C} \subset l_1$, 
it can be written as a conic combination of cut metrics, 
i.e.\ there are $\gamma_A\ge 0$ for $A\subset \tilde{C}$ such that 
\[
  \|x-y\|_1 = \sum_A \gamma_A |1_A(x)-1_A(y)|,\qquad \forall x,y\in \tilde{C},
\]
where $1_A(x)=1$ if $x\in A$ and $0$ otherwise. In other words, 
$x\to\sum_A \gamma_A 1_A(x)$ is an isometric embedding of $\tilde{C}$ 
into $\ell_1$.
Let $\Psi$ have one coordinate for every subset $B\subset D$;
this coordinate is given by $x\to \sum_{A:\ A\cap D =B} \gamma_A 
1_A(x)$. In words, we add together coordinates that correspond 
to different $A$ but have the same $A \cap D$.
Observe that $\Psi$ is $1$-\lip for all $x\in \tilde{C}$,
simply because adding two coordinates together can only decrease distances,
and that this $\Psi$ is an isometry on $D$, 
because for all $x,y \in D$ if coordinates corresponding to $A$ and $A'$ are added together
then necessarily $1_A(x)=1_{A'}(x)$ and similarly $1_A(y)=1_{A'}(y)$.
Observe that $k=2^{|D|}$.
\end{proof}

\remove{ 
\subsection{Single Scale Embedding for $\ell_\infty$}

We can also extend Theorem \ref{thm:OptDistortion} to $\ell_\infty$ spaces 
as follows. 
For $r>0$ define $T_r:\R\to\R$, called the threshold transform,
by $T_r(s)=\minn{s,r}$.

\begin{theorem} \label{thm:LinftySingleScale}
For every scale $r>0$ and every $0<\delta,\veps<1/4$,
every finite set $S\subset \ell_\infty$ admits an embedding 
$\varphi:S\to \ell_\infty^k$ for $k=\lambda^{O(\log(1/\veps\delta)+\log\log\lambda)}$,
satisfying:
\begin{enumerate} \compactify
\renewcommand{\theenumi}{(\alph{enumi})}
\item\label{it:lipOptLinfty}
\lip condition:
$\|\varphi(x)-\varphi(y)\|_\infty \le \|x-y\|_\infty$
for all $x,y\in S$.
\item\label{it:bilipOpt2Linfty}
$1+\veps$ distortion to the threshold transform (at scales near $r$):
  For all $x,y\in S$ with $\delta r \le \|x-y\|_\infty\le \frac{r}{\sqrt{\delta}}$,
$$
  \frac{1}{1+\veps}
  \le \frac{\|\varphi(x)-\varphi(y)\|_\infty} {T_r(\|x-y\|_\infty)}
  \le 1.
$$
\item\label{it:thresholdOptLinfty}
Boundedness: $\|\varphi(x)\|_\infty \leq r$ for all $x\in S$.
\end{enumerate}
\end{theorem}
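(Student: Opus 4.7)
The plan is to follow the seven-step framework of Section \ref{sec:l2BetterSingleScale}, adapting each step from $\ell_2$ to $\ell_\infty$ much as Theorem \ref{thm:L1SingleScale} adapts it to $\ell_1$. In two respects $\ell_\infty$ is actually cleaner than $\ell_1$: first, the McShane--Whitney theorem lets us extend $1$-Lipschitz functions coordinate-by-coordinate, so Step~7 is available just as in the $\ell_2$ proof; second, the $\ell_\infty$-norm of a direct sum equals the max of its parts, so Step~6 needs no $m^{-1/2}$ normalization. Steps 1--2 (an $\veps\delta r$-net of $S$ and a padded decomposition via Theorem \ref{thm:decomp}) carry over unchanged; we use padding roughly $r/(\veps\sqrt{\delta})$, hence cluster diameter $\Delta=O((\log\lambda)\cdot r/(\veps\sqrt{\delta}))$.

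The main new ingredient combines Steps~3 and 4 into a single direct $\ell_\infty$ construction realizing the threshold transform $T_r$ inside each cluster $C$. Let $N_C$ be an $\veps\delta r$-net of $C$, which by Property \ref{prop:doubling} has size $\lambda^{O(\log(\Delta/\veps\delta r))}=\lambda^{O(\log(1/\veps\delta)+\log\log\lambda)}$; define $f_C:C\to\ell_\infty^{|N_C|}$ by placing in coordinate $v\in N_C$ the value $f_v(x)=\minn{\|x-v\|_\infty,\ r}$. Each $f_v$ is $1$-Lipschitz and takes values in $[0,r]$, so $|f_v(x)-f_v(y)|\le T_r(\|x-y\|_\infty)$ automatically (by the Lipschitz bound when $\|x-y\|_\infty\le r$ and by boundedness when $\|x-y\|_\infty>r$). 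A short case analysis -- pick $v\in N_C$ near $y$ when $\|x-y\|_\infty\le r$, and near $x$ when $\|x-y\|_\infty>r$ -- shows that some coordinate achieves this upper bound up to an additive $O(\veps\delta r)$, giving distortion at most $1+O(\veps)$ to $T_r$ across the range of interest. No further dimension reduction is needed, since $|N_C|$ already matches the target dimension of the theorem.

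Steps 5--7 then mirror the $\ell_2$ proof. Glue clusters within each partition by $\varphi_i(x)=f_{P_i(x)}(x)\cdot s_{P_i(x)}(x)$, where $s_C(x)=\minn{1,(\beta/r)h_C(x)}$ for a smoothing rate $\beta=\Theta(\veps\sqrt{\delta})$ chosen so that padded points have $s_C(x)=1$ and the Lipschitz product rule keeps each $\varphi_i$ within $(1+\veps)T_r(\|x-y\|_\infty)$ on the required range; glue partitions by the plain direct sum $\varphi=\bigoplus_i\varphi_i$, which preserves both boundedness and the lower bound (the latter witnessed by the single partition in which $x$ is padded, where $s_{P_i(x)}(x)=s_{P_i(x)}(y)=1$ so $\varphi_i=f_{P_i(x)}$ on $\{x,y\}$); and extend from $N$ to $S$ by componentwise McShane--Whitney extension. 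Because $T_r$ is itself $1$-Lipschitz, the analogue of Lemma \ref{lem:boundG} needed to pass from net points to all of $S$ is essentially trivial: each coordinate changes by at most $O(\veps\delta r)$, which is absorbed into the $1+O(\veps)$ distortion.

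The main obstacle is that in $\ell_\infty$ the direct sum takes a max rather than averaging, so every partition -- not only those that pad $x$ -- must individually respect the upper bound $\|\varphi_i(x)-\varphi_i(y)\|_\infty\le T_r(\|x-y\|_\infty)$. This forces $\beta$ to be small enough that in non-padded partitions the cross-cluster contribution $\|\varphi_i(x)-\varphi_i(y)\|_\infty\le 2\beta\|x-y\|_\infty$ stays below $r$ for $\|x-y\|_\infty$ as large as $r/\sqrt{\delta}$; balancing $\beta$ against the padding radius and cluster diameter is precisely what fixes the upper range of the theorem at $r/\sqrt{\delta}$ rather than $r/\delta$ as in the $\ell_2$ and $\ell_1$ versions. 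The rest of the analysis is routine bookkeeping analogous to Section \ref{sec:l2BetterSingleScaleAnalysis}.
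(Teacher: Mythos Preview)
Your approach is correct and follows the same seven-step framework, but you diverge from the paper in two implementation choices worth noting.

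First, you keep Step~7 alive via coordinatewise McShane--Whitney extension, building $\varphi$ only on the net $N$ and then extending. The paper instead inherits the $\ell_1$ template of Theorem~\ref{thm:L1SingleScale}: it drops Step~7 entirely and applies Steps~2--6 to all of $S$ (so $h_C$ is measured to $S\setminus C$, and the Fr\'echet coordinates in Step~4 are indexed by $C\cap N$ but evaluated on all of $C$). Both routes are valid; yours exploits a genuine advantage of $\ell_\infty$ over $\ell_1$, while the paper's buys uniformity with the $\ell_1$ argument.

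Second, your parameter balancing differs. You shrink the smoothing rate to $\beta=\Theta(\veps\sqrt{\delta})$ and enlarge the padding to roughly $r/(\veps\sqrt{\delta})$, so that every partition individually obeys $\|\varphi_i(x)-\varphi_i(y)\|_\infty\le(1+O(\veps))\,T_r(\|x-y\|_\infty)$ before any scaling. The paper keeps the $\ell_2$ choice $\beta=\delta$ and padding $3r/\delta$, then repairs both the Lipschitz constant and the upper bound in~(b) by a single post-scaling factor $\tfrac{1}{1+2\sqrt{\delta}}$, stipulating (without loss of generality, since one may always decrease $\delta$) that $\delta\le\veps^2/4$ so that this factor is at most $1+\veps$. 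Your route avoids that stipulation; the paper's route reuses the $\ell_2$ case analysis verbatim. Either way the $r/\sqrt{\delta}$ ceiling in~(b) arises from the same tension you identify: in $\ell_\infty$ the direct sum is a max, so the worst partition controls the upper bound.

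Your combined Steps~3--4 Fr\'echet map is exactly the composition $\Psi\circ g_C$ the paper describes, just written in one line.
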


The range of distances handled in part \ref{it:bilipOpt2Linfty} 
is formally smaller than in the $\ell_2$ case 
(the upper limit is $r/\sqrt{\delta}$ instead of $r/\delta$);
but this is easily alleviated by applying the theorem with $\delta'=\delta^2$.

\begin{proof}[Proof Sketch]
The proof is quite similar to that of Theorem \ref{thm:L1SingleScale},
except for a few changes in some of the arguments.

\begin{itemize} \compactify
\item Step 3: 
The thresholding of distances 
is easily achieved by a simple variant of the well-known Frechet embedding.
Formally, $g_C$ will have $|C|$ coordinates, one for every point $z\in C$,
and that coordinate is given by $x\mapsto \minn{\|z-x\|_\infty,r}$.
It is easily verified that 
$\|g_C(x)-g_C(y)\|_\infty = T_r(\|x-y\|_\infty)$ for all $x,y\in C$.
\item Step 4:
The required bound is again obtained by a simple variant of the 
well-known Frechet embedding.
Formally, $\Psi:g_C(C)\to \ell_\infty^{k'}$ has one coordinate 
for every point $z\in g_C(C\cap N)$,
and that coordinate is given by $t\mapsto \|t-z\|_\infty$.
Thus, $k'=|C\cap N|$.
It is easily verified that this map $\Psi$ is $1$-\lip on the entire 
cluster $g_C(C)$ and is an isometry on the net points $g_C(C\cap N)$.
\item Step 6:
The scaling factor is different and now
$\varphi = \frac{1}{1+2\sqrt{\delta}}\bigoplus^m_{i=1} \varphi_i$.
The resulting embedding is $1$-\lip: We consider the worst 
partition $P_i$ (without averaging the partitions). Case 3 in the 
\lip analysis for $\ell_2$ yields a \lip constant of $1+\delta$, and 
$\frac{1+\delta}{1+2\sqrt{\delta}}<1$.
\end{itemize}

We remark that it suffices to use a padded decomposition with 
padding probability $1/2$ (instead of $1-\veps$),
but asymptotically this change does not improve the dimension.
The final dimension obtained is 
$mk'
  = O(\veps^{-1}\log\lambda\log\log\lambda)\cdot (\lambda^{O(\log(\veps^{-1}\delta^{-2}\log\lambda))})
  = \lambda^{O(\log(1/\veps\delta)+\log\log\lambda)}$.

The lower bound on $\|\varphi(x)-\varphi(y)\|_\infty$ for pair $x,y$ 
follows when $x$ is padded (case $1'$ in the distortion to the Gaussian 
analysis for $\ell_2$), where we have $\|\varphi(x)-\varphi(y)\|_\infty = 
T_r(\|x-y\|_\infty)$; we further stipulate without loss of generality that 
$\delta \le \frac{\veps^2}{4}$, so that the scaling factor is 
$1+2\sqrt{\delta} \le 1+\veps$.

For the upper bound we have (from cases $2'$ and $3'$)
$\|\varphi(x)-\varphi(y)\|_\infty 
\le \max\{2 \delta \|x-y\|_\infty, \delta\|x-y\|_\infty + T_r(\|x-y\|_\infty)\}
\le 2 \delta \|x-y\|_\infty + T_r(\|x-y\|_\infty)$. 
We consider two possibilities:
\begin{enumerate} \compactify
\item
$\delta r \le \|x-y\|_\infty \le r$. Then 
$T_r(\|x-y\|_\infty) = \|x-y\|_\infty$ and
$2 \delta \|x-y\|_\infty + T_r(\|x-y\|_\infty) 
= (1+2\delta)T_r(\|x-y\|_\infty).$
\item
$r < \|x-y\|_\infty \le \frac{r}{\sqrt{\delta}}$. Then 
$T_r(\|x-y\|_\infty) = r$ and
$2 \delta \|x-y\|_\infty + T_r(\|x-y\|_\infty) 
\le 
(1+2\sqrt{\delta})r
= (1+2\sqrt{\delta})T_r(\|x-y\|_\infty).$
\end{enumerate}
The final result follows from the scaling factor in Step 6.
\end{proof}
} 

\section{Algorithmic applications}\label{sec:app}

Here we illustrate the effectiveness and potential of our results 
for various algorithmic tasks by describing two immediate 
(theoretical) applications.

\paragraph{Distance Labeling Scheme (DLS).}
Consider this problem for the family
of $n$-point $\ell_2$ metrics with a given bound on the doubling dimension.
As usual, we assume the interpoint distances are in the range $[1,R]$.
Our snowflake embedding into $\ell_2^k$ (Theorem \ref{thm:main} for 
$\alpha = \frac{1}{2}$)
immediately provides a DLS with approximation $(1+\veps)^2\leq 1+3\veps$,
simply by rounding each coordinate to a multiple of $\veps/2k$. We have:

\begin{lemma}
Every finite subset $\ell_2$ with $\lambda = \lambda(S)$ possesses a 
$(1+\veps)$-approximate distance labeling scheme with label size
$$
  k\cdot \log \tfrac{R}{\veps/2k}
  = \tilde O(\veps^{-4}(\dimC S)^2)\log R.
$$
\end{lemma}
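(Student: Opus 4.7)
The plan is to invoke Theorem~\ref{thm:main} with $\alpha=1/2$ to obtain an embedding $\Phi:S\to\ell_2^k$ with $k=\tilde O(\veps^{-4}(\dim S)^2)$ satisfying $\|x-y\|^{1/2}\le \|\Phi(x)-\Phi(y)\|\le (1+\veps)\|x-y\|^{1/2}$ for all $x,y\in S$. The label of a point $x$ will be (a quantized version of) its image $\Phi(x)$, and given two labels we estimate the original distance $\|x-y\|$ by the \emph{squared} target-space distance $\|\tilde\Phi(x)-\tilde\Phi(y)\|^2$. Just from the snowflake guarantee this estimator sits within a factor $(1+\veps)^2\le 1+3\veps$ of the true distance, which is the stated approximation (up to a rescaling of $\veps$).

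To bound the label size, I would observe that since $\diam(S)\le R$, after snowflaking and incurring the $1+\veps$ factor the image $\Phi(S)$ is contained, after a translation, in a ball of radius $O(R^{1/2})$. Rounding each of the $k$ coordinates of $\Phi(x)$ to the nearest multiple of $\veps/(2k)$ therefore costs $O(\log(kR/\veps))$ bits per coordinate, for a total of $k\cdot\log(R/(\veps/2k))=\tilde O(\veps^{-4}(\dim S)^2)\log R$ bits per label, as claimed.

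Finally I would verify that the quantization does not spoil the distortion. The per-point quantization error in $\ell_2$ is at most $\sqrt{k}\cdot\veps/(4k)=\veps/(4\sqrt k)$, so by the triangle inequality $\bigl|\|\tilde\Phi(x)-\tilde\Phi(y)\|-\|\Phi(x)-\Phi(y)\|\bigr|\le \veps/(2\sqrt k)\le \veps$. Since all interpoint distances in $S$ are at least $1$, we also have $\|\Phi(x)-\Phi(y)\|\ge 1$, so this additive error is absorbed into a $(1+O(\veps))$ multiplicative factor; composing with the snowflake guarantee and squaring yields a final approximation ratio of $(1+\veps)^2(1+O(\veps))=1+O(\veps)$, and a constant rescaling of $\veps$ gives the promised $1+\veps$. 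There is no serious obstacle here: the argument is essentially a direct consequence of Theorem~\ref{thm:main} together with routine quantization, and the only mildly delicate point is confirming that the additive rounding error is dominated by the multiplicative snowflake guarantee, which is ensured by the lower bound on interpoint distances.
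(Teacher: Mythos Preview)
Your proposal is correct and follows essentially the same approach as the paper: apply Theorem~\ref{thm:main} with $\alpha=1/2$, round each coordinate of $\Phi(x)$ to a multiple of $\veps/2k$, and recover the original distance by squaring, incurring a $(1+\veps)^2\le 1+3\veps$ factor. The paper gives only a one-sentence sketch, whereas you additionally spell out the quantization-error analysis (that the $\ell_2$ rounding error $\veps/(4\sqrt{k})$ is absorbed multiplicatively because $\|\Phi(x)-\Phi(y)\|\ge 1$), which is a correct and useful elaboration.
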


Notice that, apart from the $\log R$ term, this bound is independent of $n$.
The published bounds of this form (see \cite{HM06} and references therein)
apply to the the more general family of all doubling metrics
(not necessarily Euclidean) but require exponentially larger label size,
roughly $(1/\veps)^{O(\dimC S)}$.

\paragraph{Approximation algorithms for clustering.}

Clustering problems are often defined as an optimization problem whose 
objective function is expressed in terms of distances between data points. 
For example, in the $k$-center problem one is given a metric $(S,d)$ and 
is asked to identify a subset of centers $C \subset S$ that minimizes the 
objective $\max_{x \in S}d(x,C)$. When the data set $S$ is Euclidean (and 
the centers are discrete, i.e.\ from $S$), one can apply our snowflake 
embedding (Theorem \ref{thm:main}) and solve the problem in the target 
space, which has low dimension $k$. Indeed, it is easy to see how to map 
solutions from the original space to the target space and vice versa, with 
a loss of at most a $(1+\veps)^2 \le 1+3\veps$ factor in the objective.

For other clustering problems, like $k$-median or min-sum clustering,
the objective function is the \emph{sum} of certain distances.
The argument above applies, except that now in the target space
we need an algorithm that solves the problem with $\ell_2$-squared costs.
For instance, to solve the $k$-median problem in the original space,
we can use an algorithm for $k$-means in the target space.
Schulman \cite{Schulman00} has designed algorithms 
for min-sum clustering under both $\ell_2$ and $\ell_2$-squared costs,
and their run time depend exponentially on the dimension. The following 
lemma follows from our snowflake embedding and \cite[Propositions 
14,28]{Schulman00}. For simplicity, we will assume that $k=O(1)$. 

\begin{lemma}
Given a set of $n$ points $S \in \R^d$, a $(1+\veps)$-approximation to 
the $\ell_2$ min-sum $k$-clustering for $S$, for $k=O(1)$, can be computed 
\begin{enumerate}
\item
in deterministic time
$n^{O(d')} 2^{2^{(O(d'))}} $.
\item 
in randomized time
$n^{O(1)} + n'^{O(d')} 2^{2^{(O(d'))}} $,
where $n' = O(
\veps^{-2} 
\log (\delta^{-1} n)
)$,
with probability $1-\delta$.
\end{enumerate}
where $d' = \min\{d,\tilde{O}(\veps^{-4}\dim^2 S)\}$.
\end{lemma}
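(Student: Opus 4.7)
The plan is to reduce the problem to running Schulman's algorithm in a low-dimensional target space, using our snowflake embedding to translate between $\ell_2$ costs and $\ell_2$-squared costs.

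First, I would apply Theorem \ref{thm:main} with snowflake parameter $\alpha=1/2$ and accuracy $\veps'=\Theta(\veps)$ to produce an embedding $\Phi:S\to \ell_2^{d''}$, where $d''=\tilde O(\veps^{-4}(\dimC S)^2)$, satisfying
\[
  1\le \frac{\|\Phi(x)-\Phi(y)\|_2}{\|x-y\|_2^{1/2}}\le 1+\veps'.
\]
Squaring gives $\|\Phi(x)-\Phi(y)\|_2^{2}\in [1,(1+\veps')^{2}]\cdot \|x-y\|_2$, so the map $\Phi$ converts an instance of $\ell_2$ min-sum $k$-clustering on $S$ into an instance of $\ell_2$-\emph{squared} min-sum $k$-clustering on $\Phi(S)\subset\ell_2^{d''}$ with at most a $(1+\veps')^{2}$ multiplicative distortion in the value of every clustering (the objective is a sum of pairwise distances, each of which is distorted by that factor). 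Since any clustering of $S$ corresponds one-to-one to a clustering of $\Phi(S)$ via the map $\Phi$, near-optimal solutions transfer back and forth with the same $(1+\veps')^{2}$ loss.

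Next, I would solve the $\ell_2$-squared min-sum $k$-clustering problem on $\Phi(S)$ by invoking Schulman's algorithm. By \cite[Propositions 14 and 28]{Schulman00}, for a set of $n$ points in $\R^{d''}$ and constant $k$, a $(1+\veps)$-approximation to $\ell_2$-squared min-sum $k$-clustering can be obtained in deterministic time $n^{O(d'')}2^{2^{O(d'')}}$, or in randomized time $(n')^{O(d'')}2^{2^{O(d'')}}$ with $n'=O(\veps^{-2}\log(\delta^{-1}n))$ and success probability $1-\delta$. Pulling this solution back through $\Phi$ to a clustering of $S$, the overall approximation ratio is $(1+\veps')^{2}(1+\veps)\le 1+O(\veps)$, and by rescaling $\veps$ by an absolute constant we obtain the stated $(1+\veps)$-approximation.

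Finally, to get $d'=\min\{d,\tilde O(\veps^{-4}\dim^{2}S)\}$ rather than simply $d''$, I would simply skip the embedding step whenever $d\le d''$ and run Schulman's algorithm directly in the original $\R^{d}$; this is also fine because the original costs are genuine $\ell_2$ distances, and Schulman's other result (for the unsquared $\ell_2$ version in \cite[Proposition 14]{Schulman00}) has the same form. There is essentially no technical obstacle here — the only thing to take care of is the bookkeeping of constants in $\veps$ and the observation that Schulman's complexity depends on the ambient dimension only through the base $n^{O(d')}$ and the doubly-exponential factor $2^{2^{O(d')}}$, so substituting $d'$ for the ambient dimension yields exactly the claimed runtime bounds.
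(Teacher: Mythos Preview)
Your proposal is correct and follows essentially the same approach as the paper, which does not give a detailed proof but simply states that the lemma ``follows from our snowflake embedding and \cite[Propositions 14,28]{Schulman00}'' after explaining that the $\alpha=1/2$ snowflake converts $\ell_2$ costs into $\ell_2$-squared costs in the target space. Your write-up merely spells out the reduction (including the trivial case $d\le d''$) that the paper leaves implicit.
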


\subsubsection*{Acknowledgments}
The authors thank Assaf Naor and Gideon Schechtman for useful discussions
and references, and Yair Bartal for helpful comments on an earlier
version of this paper.

{\small 
\bibliographystyle{alpha}
\bibliography{robi,drafts}
}

\appendix

\section{Omitted proof.}\label{sec:A}

\begin{proof}[Proof of Lemma \ref{lem:boundG}]
For assertion (i), since $e^{-t^2/r^2}$ is decreasing in $t$,
$G_r(t) = r(1-e^{-t^2/r^2})^{1/2}$ is increasing in $t$.
Now consider the function $F(x) = \frac{1-e^{-x}}{x}$.
This function is monotone decreasing in $x>0$: The 
derivative of $F(x)$ is $\frac{(x+1)e^{-x} - 1}{x^2}$, 
and it is easily verified that the numerator is negative 
whenever $x>0$. It follows that 
$G_r(t) = t \sqrt{F(t^2/r^2)}$ is monotone increasing in
$r$, completing assertion (i). Further
$\frac{G_r(t)}{t} = \sqrt{F(t^2/r^2)}$ is monotone 
decreasing in $t$, which proves assertion (ii).

For assertion (iii), recall from assertion (i) 
that $G_r(t)$ is monotonically increasing (in $t$), and thus
\[
  \frac{G_r(t')}{G_r(t)}
  \le \frac{G_r((1+\eta)t)}{G_r(t)}
  \le \frac{G((1+\eta)t/r)}{G(t/r)}.
\]
Letting $s=t/r$, we have
\begin{equation}  \label{eq:3}
  \frac{G((1+\eta)s)^2}{G(s)^2} - 1
  = \frac{G((1+\eta)s)^2-G(s)^2}{G(s)^2} 
  = \frac{e^{-s^2} - e^{-(1+\eta)^2 s^2}}{1-e^{-s^2}}
  \le \frac{e^{-s^2}(1-e^{-3\eta s^2})}{1-e^{-s^2}}.
\end{equation}
Recall that by the Taylor series expansion, 
$e^{-z} = 1 - z + \frac{z^2}{2} - \frac{z^3}{6} + \ldots$,
and so for all $0\le z\le 1$ we have 
$ 1-z \le e^{-z} \le 1-z+z^2/2 \le 1-z/2$.
Using this estimate, we now have three cases:
\begin{itemize} \compactify
\item 
When $s^2\le 1$, the righthand side of \eqref{eq:3} is at most
$\frac{1\cdot 3\eta s^2}{s^2/2} \le 6\eta$.
\item
When $1\le s^2\le 1/3\eta$, the righthand side of \eqref{eq:3} is at most
$\frac{e^{-s^2}\cdot 3\eta s^2}{1-1/e} \le 6\eta s^2 e^{-s^2}\le 6\eta/e$,
where the last inequality follows from the observation that
$z\mapsto ze^{-z}$ is monotonically decreasing for all $z\ge 1$.
\item
When $s^2\ge 1/3\eta$, the righthand side of \eqref{eq:3} is at most
$\frac{e^{-s^2}\cdot 1}{1-1/e} \le \frac{e^{-s^2}\cdot 3\eta s^2}{1-1/e} \le 6\eta/e$,
where the last inequality follows similarly to the previous case.
\end{itemize}
Altogether, we conclude that
$
  \frac{G_r(t')}{G_r(t)}
  \le \frac{G((1+\eta)s)}{G(s)}
  \le \sqrt{1+6\eta}
  < 1+3\eta.
$
\end{proof}

\remove{
\section{Basic properties of Lipschitz functions}\label{sec:lip-prop}

\begin{enumerate} \compactify
\renewcommand{\theenumi}{(P\arabic{enumi})}
\item \label{it:lipscalar}
Let $f:X\to \ell_2^k$ and $\alpha>0$. Then
$\lipnorm{\alpha f} \leq \alpha\lipnorm{f}$.
\item \label{it:lipsum}
Let $f_1,\ldots,f_m:X\to \ell_2^k$. Then the sum
$\sum_{i=1}^m f_i$ which maps $x\to f_1(x)+\ldots+f_m(x)\in\ell_2^k$
has \lip norm
$\lipnorm{\sum_{i=1}^m f_i}
  \leq (\sum_{i=1}^m \lipnorm{f_i})^{1/2}$.

Similarly, the direct sum $\bigoplus_{i=1}^m f_i$
which maps $x\to f_1(x)\oplus\cdots\oplus f_m(x)\in\ell_2^{mk}$,
has \lip norm
$\lipnorm{\bigoplus_{i=1}^m f_i} \leq (\sum_{i=1}^m \lipnorm{f_i})^{1/2}$.
Both cases can be further bounded by $m^{1/2} \max_{i=1,\ldots,m} \lipnorm{f_i}$.
\item \label{it:lipcompose}
Let $f:X\to Y$ and $g:Y\to Z$.
Then their composition $g\circ f$ mapping $x\to g(f(x))\in Z$
has \lip norm
$\lipnorm{g\circ f}\leq \lipnorm{f}\cdot \lipnorm{g}$.
\item \label{it:lipproduct}
Let $f:X\to \ell_2^k$ and $g:X\to\R$.
Then their product $fg:x\to g(x)\cdot f(x)$ has \lip norm
\[
  \lipnorm{fg} \leq
  \lipnorm{f}\cdot \max_x |g(x)| + \lipnorm{g}\cdot \max_x \|f(x)\|.
\]
\item \label{it:lipthreshold}
Let $f:X\to \R$ and $T>0$. Then thresholding $f$ at value $T$,
i.e. $g:x\to \minn{f(x),T}$,
has \lip norm $\lipnorm{g} \le \lipnorm{f}$.
\end{enumerate}
}

\end{document}